\def\cameraReady{} 
\newcommand\bftlong{Byzantine Fault-Tolerant\xspace}
\newcommand\libra{Libra\xspace}
\newcommand\sysname{FastPay\xspace}
\newcommand\sysmain{Primary\xspace}
\newcommand{\keyword}[1]{\normalfont \texttt{#1}}
\newcommand{\accounts}{\keyword{accounts}}
\newcommand{\accountbalance}{\keyword{balance}}
\newcommand{\totalbalance}{\keyword{total\_balance}}
\newcommand{\val}{\keyword{value}}
\newcommand{\sequencenumber}{\keyword{sequence}}
\newcommand{\lasttransactionindex}{\keyword{last\_transaction}}
\newcommand{\nextsequencenumber}{\keyword{next\_sequence}}
\newcommand{\redeemlog}{\keyword{redeemed}}
\newcommand{\synchronizationlog}{\keyword{synchronized}}
\newcommand{\funding}{\keyword{funding}}
\newcommand{\confirmedlog}{\keyword{confirmed}}
\newcommand{\pendingconfirmation}{\keyword{pending}}
\newcommand{\amount}{\keyword{amount}}
\newcommand{\sender}{\keyword{sender}}
\newcommand{\recipient}{\keyword{recipient}}
\newcommand{\receivedlist}{\keyword{received}}
\newcommand{\transactions}{\keyword{transactions}}
\newcommand{\sysnameaddresses}{\keyword{FastPay}}
\newcommand{\sysmainaddresses}{\keyword{Primary}}
\newcommand{\transfer}{O}
\newcommand{\cert}{C}
\newcommand{\sync}{S}
\newcommand{\authority}{\alpha}
\newcommand{\para}[1]{\vspace{2mm}\noindent\textbf{#1}.\xspace}
\newcommand{\etal}{\textit{et al.}\@\xspace}
\newcommand{\eg}{\textit{e.g.}\@\xspace}
\newcommand{\ie}{\textit{i.e.}\@\xspace}
\def\first{({i})\xspace}
\def\second{({ii})\xspace}
\def\third{({iii})\xspace}
\def\fourth{({iv})\xspace}
\def\fifth{({v})\xspace}
\def\sixth{({vi})\xspace}
\definecolor{verylightgray}{gray}{0.9}
\lstdefinelanguage{alg}{
morekeywords={fn, let, mut, ensure, match, bail, if, None, Some, return, Result, Ok},
morecomment=[l]///,
}
  \providecommand\BibTeX{{
    \normalfont B\kern-0.5em{\scshape i\kern-0.25em b}\kern-0.8em\TeX}}}
\begin{document}

\title{\sysname: High-Performance Byzantine Fault Tolerant Settlement}

\ifdefined\cameraReady
\author{Mathieu Baudet}
\authornote{Alphabetical order.}
\email{mathieubaudet@fb.com}
\affiliation{
    \institution{Facebook Novi}
}
\author{George Danezis}
\email{gdanezis@fb.com}
\affiliation{
    \institution{Facebook Novi}
}
\author{Alberto Sonnino}
\email{asonnino@fb.com}
\affiliation{
    \institution{Facebook Novi}
}
\else
  \author{}
\fi

\begin{abstract}
\sysname allows a set of distributed authorities, some of which are Byzantine, to maintain a high-integrity and availability settlement system for pre-funded payments. It can be used to settle payments in a native unit of value (crypto-currency), or as a financial side-infrastructure to support retail payments in fiat currencies. \sysname is based on Byzantine Consistent Broadcast as its core primitive, foregoing the expenses of full atomic commit channels (consensus). The resulting system has low-latency for both confirmation and payment finality. Remarkably, each authority can be sharded across many machines to allow unbounded horizontal scalability. Our experiments demonstrate intra-continental confirmation latency of less than 100ms, making \sysname applicable to point of sale payments. In laboratory environments, we achieve 
over 80,000 transactions per second with 20 authorities---surpassing the requirements of current retail card payment networks, while significantly increasing their robustness.
\end{abstract}

\keywords{distributed system, bft, settlement system, consistent broadcast}

\maketitle

\section{Introduction} \label{sec:introduction}
Real-time gross settlement systems (RTGS)~\cite{rtgs-2} constitute the most common approach to financial payments in closed banking networks, that is, between reputable institutions. In contrast, blockchain platforms have proposed a radically different paradigm, allowing account holders to interact directly with an online, yet highly secure, distributed ledger. Blockchain approaches aim to enable new use cases such as personal e-wallets or private transactions, and generally provide ecosystems more favorable to users. However, until now, such open, distributed settlement solutions have come at a high performance cost and questionable scalability compared to traditional, closed RTGS systems.

\sysname is a Byzantine Fault Tolerant (BFT) real-time gross settlement (RTGS) system. It enables authorities to jointly maintain account balances and settle pre-funded retail payments between accounts. It supports extremely low-latency confirmation (sub-second) of eventual transaction finality, appropriate for physical point-of-sale payments. It also provides extremely high capacity, comparable with peak retail card network volumes, while ensuring gross settlement in real-time. \sysname eliminates counterparty and credit risks of net settlement and removes the need for intermediate banks, and complex financial contracts between them, to absorb these risks. \sysname can accommodate arbitrary capacities through efficient sharding architectures at each authority. Unlike any traditional RTGS, and more like permissioned blockchains, \sysname can tolerate up to $f$ Byzantine failures out of a total of $3f+1$ authorities, and retain both safety, liveness, and high-performance.


\sysname can be deployed in a number of settings. First, it may be used as a settlement layer for a native token and crypto-currency, in a standalone fashion. Second, it may be deployed as a side-chain of another crypto-currency, or as a high performance settlement layer on the side of an established RTGS to settle fiat retail payments. In this paper we present this second functionality in detail, since it exercises all features of the system, both payments between \sysname accounts, as well as payments into and out of the system. 

\para{Contributions} We make the following contributions:
\begin{itemize}
\setlength\itemsep{0em}
    \item The \sysname design is novel in that if forgoes full consensus; it leverages the semantics of payments to minimize shared state between accounts and to increase the concurrency of asynchronous operations; and supports sharded authorities.
    \item We provide proofs of safety and liveness in a Byzantine and fully asynchronous network setting. We show that \sysname keeps all its properties despite the lack of total ordering, or asynchrony of updates to recipient accounts. 
    \item We experimentally demonstrate comparatively very high throughput and low latency, as well as the robustness of the system under conditions of extremely high concurrency and load. We show that performance is maintained even when some (Byzantine) authorities fail.
\end{itemize}

\para{Outline} This paper is organized as follows:
 \Cref{sec:background} introduces real-time gross settlement systems, and permissioned blockchains.
 \Cref{sec:overview} introduces the entities within \sysname, their interactions, and the security properties and threat model.
 \Cref{sec:design} details the design of \sysname both as a standalone system, and operated in conjunction with a \sysmain. 
 \Cref{sec:security} discusses safety and liveness.
 \Cref{sec:implementation} briefly describes the implementation of the \sysname prototype.
 \Cref{sec:evaluation} provides a full performance evaluation of \sysname as we modulate its security parameters and load. 
 \Cref{sec:discussion} discusses key open issues such as privacy, governance mechanisms and economics of the platform.
 \Cref{sec:related} covers the related work, both in terms of traditional financial systems and crypto-currencies.
 \Cref{sec:conclusion} concludes.

\section{Background} \label{sec:background}

Real-time gross settlement systems (RTGS)~\cite{rtgs-2} are the backbone of modern financial systems. Commercial banks use them to maintain an account with central banks and settle large value payments. 

RTGS systems are limited in their capacity\footnote{For example, the relatively recent European Central Bank TARGET2 system has a maximum capacity of 500 transactions per second~\cite{target2}.}, making them unsuitable for settling low-value high-volume retail payments directly. Such retail payments are deferred: banks exchange information bilaterally about pending payments (often through SWIFT~\cite{swift, swift-iso}), they aggregate and net payments, and only settle net balances through an RTGS, often daily. The often quoted volume figure of around 80,000 transactions per second for retail card networks~\cite{mastercard-performance,visa-performance} represents the rate at which `promises' for payments are exchanged between banks, and not settled payments.
Traditional RTGS systems are implemented as monolithic centralized services operated by a single authority, and must employ a number of technical and organizational internal controls to ensure they are reliable (through a primary-replica architecture with manual switch over) and correct---namely ensuring availability and integrity. Traditionally only regulated entities have accounts in those systems. This result in a Balkanized global financial system where financial institutions connect to multiple RTGS, directly or indirectly through corresponding banks, to execute international payments.

Blockchain-based technologies, starting with Bitcoin~\cite{bitcoin} in 2009, provide more open settlement systems often combined with their own digital tokens to represent value. Permissionless blockchains have been criticized for their low performance~\cite{croman2016scaling} in terms of capacity and finality times. However, a comparison with established settlement systems leads to a more nuanced assessment. Currently, Ethereum~\cite{wood2014ethereum} can process 15 transactions per second. The actual average daily load on the EU ECB TARGET2 system is about 10 transactions per second~\cite{target2} (in 2018) which is a comparable figure (and lower than the peak advertised capacity of 500 transaction per second). However, it falls very short of the advertised transaction rate of 80,000 transaction per second peak for retail payment networks---even though this figure does not represents settled transactions. The stated ambitions of permissionless projects is to be able to settle transactions at this rate on an open and permissionless network, which remains an open research challenge~\cite{vukolic2015quest}. 

Permissioned blockchains~\cite{permissioned-blockchains, sok-consensus} provide a degree of decentra\-lization---allowing multiple authorities to jointly operate a ledger---at the cost of some off-chain governance to control who maintains the blockchain. The most prominent of such proposals is the Libra network~\cite{libra}, 
developed by the Libra Association. Other technical efforts include Hyperledger~\cite{hyperledger}, Corda~\cite{corda} and Tendermint~\cite{tendermint}.  These systems are based on traditional notions of Byzantine Fault Tolerant state machine replication (or sometimes consensus with crash-failures only), which presupposes an atomic commit channel (often referred as `consensus') that sequences all transactions. Such architectures allow for higher capacities than Bitcoin and Ethereum. LibraBFT, for example, aims for 1,000 transactions per second at peak capacity~\cite{libraTPS-1, libraTPS-2}; this exceeds many RTGS systems but is still below the peak volumes for retail payment systems. Regarding transaction finality, a~latency of multiple seconds is competitive with RTGS systems but is not suitable for retail payment at physical points of sale. 

\section{Overview} \label{sec:overview}
To illustrate its full capabilities, we describe \sysname as a side chain of a primary RTGS holding the primary records of accounts. We call such a primary ledger \emph{the \sysmain} for short, and its accounts \emph{the \sysmain accounts}. The \sysmain can be instantiated in two ways: \first as a programmable blockchain, through smart contracts, like Ethereum~\cite{wood2014ethereum} or \libra~\cite{libra}. The \sysmain can also be instantiated \second as a traditional monolithic RTGS operated by a central bank. In this case the components interfacing with \sysname are implemented as database transactions within the \sysmain. In both cases \sysname acts as a side infrastructure to enable pre-funded retail payments. \sysname can also operate with a native asset, without a primary ledger. In this case sub-protocols involving the \sysmain are superfluous, since all value is held within \sysname accounts and never transferred out or into the system. 

\subsection{Participants} \label{sec:participants}
\sysname involves two types of participants: \first authorities, and \second account owners (\emph{users}, for short). All participants generate a key pair consisting of a private signature key and the corresponding public verification key.
As a side-chain, \sysname requires a smart contract on the main blockchain, or a software component on an RTGS system that can authorize payments based on the signatures of a threshold of authorities from a \emph{committee} with fixed membership.

%
%

By definition, an \emph{honest} authority always follows the \sysname protocol, while a \emph{faulty} (or \emph{Byzantine}) one may deviate arbitrarily.
We present the \sysname protocol for $3f+1$ equally-trusted authorities, assuming a fixed (but unknown) subset of at most $f$ Byzantine authorities. In this setting, a \emph{quorum} is defined as any subset of $2f+1$ authorities. (As for many BFT protocols, our proofs only use the classical properties of quorums thus apply to all Byzantine quorum systems~\cite{malkhi1998byzantine}.)

When a protocol message is signed by a quorum of authorities, it is said to be \emph{certified}: we call such a jointly signed message a \emph{certificate}.

\subsection{Accounts and Actions} \label{sec:account-actions}

A \sysname \emph{account} is identified by its \emph{address}, which we instantiate as the cryptographic hash of its public verification key. The state of a \sysname account is affected by four main high-level actions:
\begin{enumerate}
\setlength\itemsep{0em}
    \item Receiving funds from a \sysmain account.
    \item Transferring funds to a \sysmain account.
    \item Receiving funds from a \sysname account.
    \item Transferring funds to a \sysname account.
\end{enumerate}
%
\sysname also supports two read-only actions that are necessary to ensure liveness despite faults: reading the state of an account at a \sysname authority, and obtaining a certificate for any action executed by an authority.

\subsection{Protocol Messages} \label{sec:protocol-messages}
The \sysname protocol consists of \emph{transactions} on the \sysmain, denoted with letter $T$, and network requests that users send to \sysname authorities, which we call \emph{orders}, and denote with letter $O$. Users are responsible for broadcasting their orders to authorities and for processing the corresponding responses. The authorities are passive and \emph{do not communicate directly with each other}.

\para{Transfer orders} All transfers initiated by a \sysname account start with a \emph{transfer order} $O$ including the following fields:
\begin{itemize}
\setlength\itemsep{0em}
    \item The sender's \sysname address, written $\sender(O)$.
    \item The recipient, either a \sysname or a \sysmain address, written $\recipient(O)$.
    \item A non-negative amount to transfer, written $\amount(O)$.
    \item A sequence number $\sequencenumber(O)$.
    \item Optional user-provided data.
    \item A signature by the sender over the above data.
\end{itemize}

Authorities respond to valid transfer orders by signing them (see next section for validity checks). A quorum of such signatures is meant to be aggregated into a \emph{transfer certificate}, noted $C$.

\para{Notations} We write $O = \val(C)$ for the original transfer order $O$ certified by $C$. For simplicity, we omit the operator $\val$ when the meaning is clear, e.g. $\sender(C) = \sender(\val(C))$.
\sysname addresses are denoted with letters $x$ and $y$. We use $\alpha$ for authorities and by extension for the shards of authorities.

\subsection{Security Properties and Threat Model} \label{sec:properties}
\sysname guarantees the following security properties:
\begin{itemize}
\setlength\itemsep{0em}
    \item \textbf{Safety:} No units of value are ever created or destroyed; they are only transferred between accounts.
    \item \textbf{Authenticity:} Only the owner of an account may transfer value out of the account.
    \item \textbf{Availability:} Correct users can always transfer funds from their account.
    \item \textbf{Redeemability:} A transfer to \sysname or \sysmain is guaranteed to eventually succeed whenever a valid transfer certificate has already been produced.
    \item \textbf{Public Auditability:} There is sufficient public cryptographic evidence for the state of \sysname to be audited for correctness by any party. 
    \item \textbf{Worst-case Efficiency:} Byzantine authorities (or users) cannot significantly delay operations from correct users.
\end{itemize}

The above properties are maintained under a number of security assumptions: \first~there are at most $f$ Byzantine authorities out of $3f+1$ total authorities.
\second~The network is fully asynchronous, and the adversary may arbitrarily delay and reorder messages~\cite{dwork1988consensus}. However, messages are eventually delivered. \third~Users may behave arbitrarily but availability only holds for \emph{correct users} (defined in \Cref{sec:clients}). \fourth~The \sysmain provides safety and liveness (when \sysname is used in conjunction with it).
We further discuss the security properties of \sysname in \Cref{sec:security}.


\section{The \sysname Protocol} \label{sec:design}
\sysname authorities hold and persist the following information.

\para{Authorities}
The state of an authority $\alpha$ consists of the following information:
\begin{itemize}
\setlength\itemsep{0em}
    \item The authority name, signature and verification keys.
    \item The committee, represented as a set of authorities and their verification keys.
    \item A map $\accounts(\alpha)$ tracking the current account state of each \sysname address $x$ in use (see below).
    \item An integer value, noted $\lasttransactionindex(\alpha)$, referring to the last transaction that paid funds into the \sysmain. This is used by authorities to synchronize \sysname accounts with funds from the \sysmain (see Section~\ref{sec:libra-to-fastpay}).
\end{itemize}

\para{\sysname accounts}
The state of a \sysname account $x$ within the authority $\alpha$ consists of the following:
\begin{itemize}
\setlength\itemsep{0em}
    \item The public verification key of $x$, used to authenticate spending actions.
    \item An integer value representing the balance of payment, written $\accountbalance^x(\alpha)$.
    \item An integer value, written $\nextsequencenumber^x(\alpha)$, tracking the expected sequence number for the next spending action to be created. This value starts at $0$. 
    \item A field $\pendingconfirmation^x(\alpha)$ tracking the last transfer order $O$ signed by $x$ such that the authority $\alpha$ considers $O$ as \emph{pending confirmation}, if any; and absent otherwise.
    \item A list of certificates, written $\confirmedlog^x(\alpha)$, tracking all the transfer certificates $C$ that have been \emph{confirmed} by $\alpha$ and such that $\sender(C) = x$. One such certificate is available for each sequence number $n$ ($0 \leq n < \nextsequencenumber^x(\alpha)$).
    \item A list of \emph{synchronization} orders, written $\synchronizationlog^x(\alpha)$, tracking transferred funds from the \sysmain to account $x$. (See Section~\ref{sec:libra-to-fastpay}.)
\end{itemize}
We also define $\receivedlist^x(\alpha)$ as the list of confirmed certificates for transfers received by $x$. Formally, $\receivedlist^x(\alpha) = \{C \text{ s.t. } \exists y.\; C \in \confirmedlog^y(\alpha) \text{ and } \recipient(C) = x \}$.
 
 We assume arbitrary size integers. Although \sysname does not let users overspend, (temporary) negative balances for account states are allowed for technical reasons discussed in \Cref{sec:security}. When present, a pending (signed) transfer order $O = \pendingconfirmation^x(\alpha)$ effectively locks the sequence number of the account $x$ and prevents $\alpha$ from accepting new transfer orders from $x$ until \emph{confirmation}, that is, until a valid transfer certificate $C$ such that $\val(C) = O$ is received. This mechanism can be seen as the `Signed Echo Broadcast' implementation of a Byzantine consistent broadcast on the label (account, next sequence number)~\cite{cachinBook}.

\para{Storage considerations} The information contained in the lists of certificates $\confirmedlog^x(\alpha)$ and $\receivedlist^x(\alpha)$ and in the synchronization orders $\synchronizationlog^x(\alpha)$ is self-authenticated---being respectively signed by a quorum of authorities and by the \sysmain. Remarkably, this means that authorities may safely outsource these lists to an external high-availability data store. Therefore, \sysname authorities only require a constant amount of local storage per account, rather than a linear amount in the number of transactions.

\subsection{Transferring Funds within \sysname} \label{sec:fastpay-to-*}

\sysname operates by implementing a Byzantine consistent broadcast channel per account, specifically using a `Signed Echo Broadcast' variant (Algorithm 3.17 in~\cite{cachinBook}). It operates in two phases and all messages are relayed by the initiating user. Consistent Broadcast ensures \emph{Validity}, \emph{No duplication}, \emph{Integrity}, and \emph{Consistency}. It always terminates when initiated by a correct user. However, if a \sysname user equivocates, current operations may fail, and the funds present on the users account may become inaccessible.

\begin{figure}[t]
\includegraphics[width=.48\textwidth]{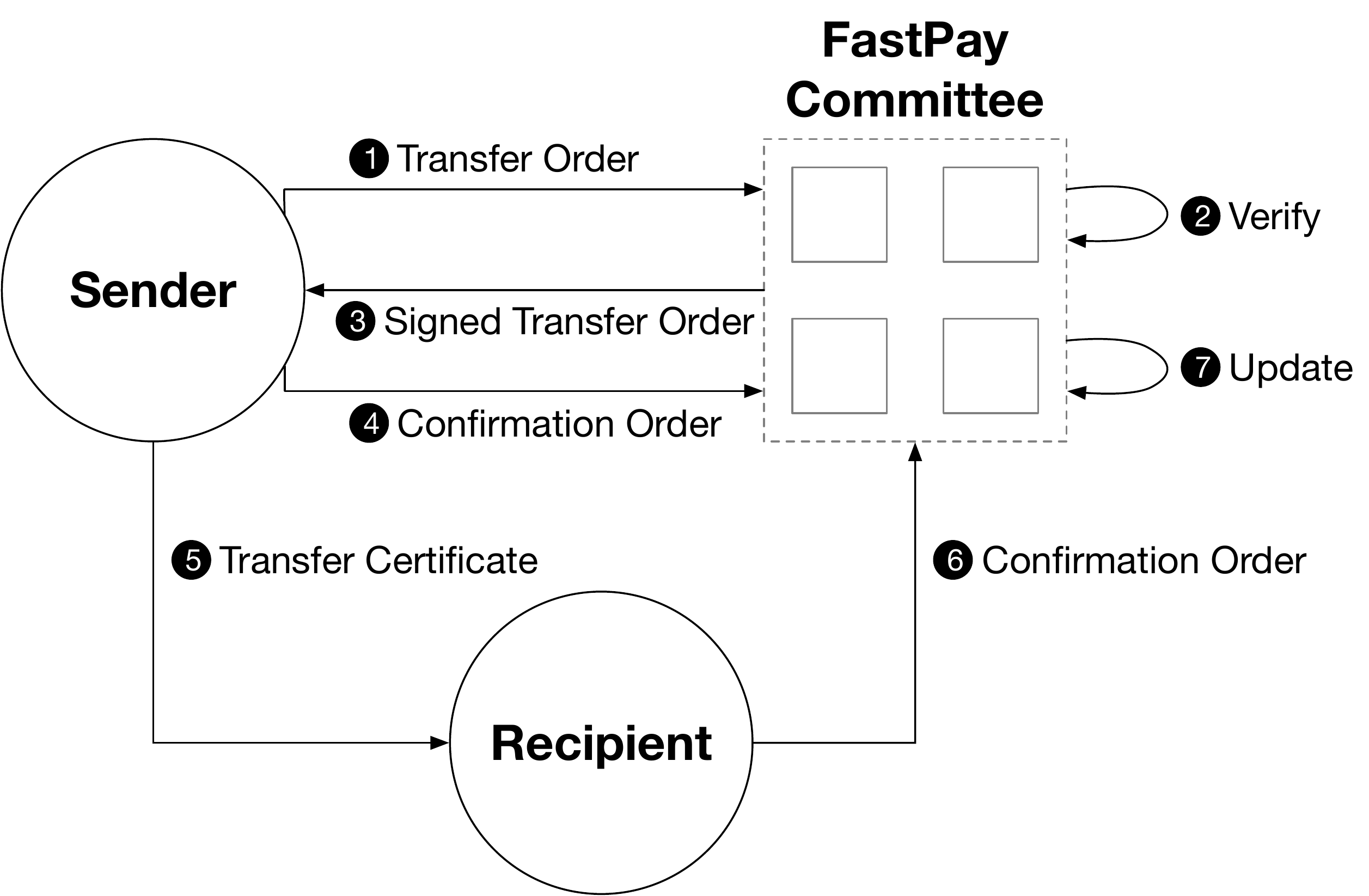}
\caption{Transfer of funds from \sysname to \sysname.}
\label{fig:fastpay-to-fastpay}
\end{figure}

\para{Transferring funds}
\Cref{fig:fastpay-to-fastpay} illustrates a transfer of funds within \sysname.
To transfers funds to another \sysname account, the sender creates a \sysname transfer order ($\transfer$) with the next sequence number in their account, and signs it. They then send the \sysname transfer order to all authorities. Each authority checks (\ding{202}) \first that the signature is valid, \second that no previous transfer is pending (or is for the same transfer), \third that the amount is positive, \fourth that the sequence number matches the expected next one ($\sequencenumber(O) = \nextsequencenumber^x(\authority)$), and \fourth that the balance ($\accountbalance^x(\alpha)$) is sufficient (\ding{203}). Then, it records the new transfer as pending and sends back a signature on the transfer order (\ding{204}) which is also stored. The authority algorithm to handle transfer orders, corresponding to step \ding{203}, is presented in \Cref{fig:code}.

The user collects the signatures from a quorum of authorities, and uses them along the \sysname transfer order to form a transfer certificate.
The sender provides this transfer certificate to the recipient as proof that the payment will proceed (\ding{206}).
To conclude the transaction, the sender (\ding{205}) or the recipient (\ding{207}) broadcast the transfer certificate ($\cert$) to the authorities (called \emph{confirmation order})\footnote{Aggregating signed transfer orders into a transfer certificate does not requires knowledge of any secret; therefore, anyone (and not only the sender or the recipient) can broadcast the transfer certificate to the authorities to conclude the transaction.}.


Upon reception of a confirmation order for the current sequence number, each authority $\alpha$ (\ding{208}) \first checks that a quorum of signatures was reached, \second decreases the balance of the sender, \third increments the sequence number ($\nextsequencenumber^x(\alpha) + 1$) to ensure `deliver once' semantics, and \fourth sets the pending order to None ($\pendingconfirmation^x(\alpha)=\text{None}$). Each authority $\alpha$ also \fifth adds the certificate to the list $\confirmedlog^x(\alpha)$, and \sixth increases the balance of the recipient account asynchronously (\ie without sequencing this write in relation to any specific payments from this account across authorities). The authority algorithm to handle confirmation orders (as in step \ding{208}) is presented in \Cref{fig:code}.

In \Cref{sec:security} and \Cref{sec:proofs}, we show that the \sysname protocol is safe thanks to the semantics of payments into an account and their commutative properties. \sysname is a significant simplification and deviation from an orthodox application of Guerraroui~\etal~\cite{consensus-number}, where accounts are single-writer objects and all write actions are mediated by the account owner. \sysname allows payments to be executed after a single consistent broadcast, rather than requiring recipients to sequence payments into their accounts separately. This reduces both latency and the state necessary to prevent replays.

\begin{figure}
\begin{lstlisting}[mathescape=true]
fn handle_transfer_order($\authority$, $\transfer$) -> Result {
    /// Check shard and signature.
    ensure!($\authority$.in_shard($\sender(\transfer)$));
    ensure!($\transfer$.has_valid_signature());

    /// Obtain sender account.
    match $\accounts(\authority)$.get($\sender(\transfer)$) {
        None => bail!(),
        Some(account) => {
            /// Check if the same order is already pending.
            if let Some(pending) = account.pending {
                ensure!(pending.transfer == $\transfer$);
                return Ok();
            }
            ensure!(account.next_sequence == $\sequencenumber(\transfer)$);
            ensure!(account.balance >= $\amount(\transfer)$);
            /// Sign and store new transfer.
            account.pending = Some($\authority$.sign($\transfer$));
            return Ok();
}   }   }

fn handle_confirmation_order($\authority$, $\cert$)
    -> Result<Option<CrossShardUpdate>> {
    /// Check shard and certificate.
    ensure!($\authority$.in_shard($\sender(\cert)$));
    ensure!($\cert$.is_valid($\authority$.committee));
    let $\transfer$ = $\val(\cert)$;

    /// Obtain sender account.
    let sender_account =
        $\accounts(\authority)$.get($\sender(\transfer)$)
        .or_insert(AccountState::new());

    /// Ignore old certificates.
    if sender_account.next_sequence > $\sequencenumber(\transfer)$ {
        return Ok(None);
    }

    /// Check sequence number and balance.
    ensure!(sender_account.next_sequence == $\sequencenumber(\transfer)$);
    ensure!(sender_account.balance >= $\amount(\transfer)$);

    /// Update sender account.
    sender_account.balance -= $\amount(\transfer)$;
    sender_account.next_sequence += 1;
    sender_account.pending = None;
    sender_account.confirmed.push($\cert$);

    /// Update recipient locally or cross-shard.
    let recipient = match $\recipient(\transfer)$ {
        Address::FastPay(recipient) => recipient,
        Address::Primary(_) => { return Ok(None) }
    };

    /// Same shard: read and update the recipient.
    if $\authority$.in_shard(recipient) {
        let recipient_account = $\accounts(\authority)$.get(recipient)
            .or_insert(AccountState::new());
        recipient_account.balance += $\amount(\transfer)$;
        return Ok(None);
    }

    /// Other shard: request a cross-shard update.
    let update = CrossShardUpdate {
        shard_id: $\authority$.which_shard(recipient),
        transfer_certificate: $\cert$,
    };
    Ok(Some(update))
}
\end{lstlisting}
\caption{Authority algorithms for handling transfer and confirmation orders. (The cross-shard update logic is presented in \Cref{sec:code-listings}.)}
\label{fig:code}
\end{figure}

\para{Payment finality} 
Once a transfer certificate \emph{could} be formed, namely $2f+1$ authorities signed a transfer order, no other order can be processed for an account until the corresponding confirmation order is submitted. Technically, the payment is final: it cannot be canceled, and will proceed eventually. As a result, showing a transfer certificate to a recipient convinces them that the payment will proceed. We call the showing of a transfer certificate to a recipient a \emph{confirmation}, and then subsequently submitting the confirmation order, to move funds, \emph{settlement}. \emph{Confirmation} requires only a single round trip to a quorum of authorities resulting in very low-latency (see \Cref{sec:evaluation}), and giving the system its name.

\para{Proxies, gateways and crash recovery} The protocols as presented involve the sender being on-line and mediating all communications.
However, the only action that the sender \emph{must} perform personally is forming a transfer order, requiring their signature. All subsequent operations, including sending the transfer order to the authorities, forming a certificate, and submitting a confirmation order can be securely off-loaded to a proxy trusted only for liveness. Alternatively, a transfer order may be given to a merchant (or payment gateway) that drives the protocol to conclusion. In fact, any party in possession of a signed transfer order may attempt to make a payment progress concurrently. And as long as the sender is correct the protocol will conclude (and if not may only lock the account of the faulty sender).
This provides significant deployment and implementation flexibility. A sender client may be implemented in hardware (in a NFC smart card) that only signs transfer orders. These are then provided to a gateway that drives the rest of the protocol. Once the transfer order is signed and handed over to the gateway, the sender may go off-line or crash. Authorities may also attempt to complete the protocol upon receiving a valid transfer order.
Finally, the protocol recovers from user crash failures: anyone may request a transfer order that is partially confirmed from any authority, proceed to form a certificate, and submit a confirmation order to complete the protocol.

\subsection{Sharding authorities} \label{sec:sharding}
\sysname requires minimal state sharing between accounts, and allows for a very efficient sharding at each authority by account. The consistent broadcast channel is executed on a per-account basis. Therefore, the protocol does not require any state sharing between accounts (and shards) up to the point where a valid confirmation order has to be settled to transfer funds between \sysname accounts.
On settlement, the sender account is decremented and the funds are deposited into the account of the recipient, requiring interaction between at most two shards (second algorithm of \Cref{fig:code}).

Paying into an account can be performed asynchronously, and is an operation that cannot fail (if the account does not exist it is created on the spot).  Therefore, the shard managing the recipient account only needs to be notified of the confirmed payment through a reliable, deliver once, authenticated, point to point channel (that can be implemented using a message authentication code, inter-shard sequence number, re-transmission, and acknowledgments) from the sender shard. This is a greatly simplified variant of a two-phase commit protocol coordinated by the sender shard (for details see the Presume Nothing and Last Agent Commit optimizations~\cite{DBLP:journals/dpd/SamarasBCM95,DBLP:conf/vldb/LampsonL93}). Modifying the validity condition of the consistent broadcast to ensure the recipient account exists (or any other precondition on the recipient account) would require a full two-phase commit before an authority signs a transfer order, and can be implemented while still allowing for (slightly less) efficient sharding.

The algorithms in \cref{fig:code} implement sharding. An authority shard checks that the transfer order ($\transfer$) or certificate ($\cert$) is to be handled by a specific shard and otherwise rejects it without mutating its state. Handling confirmation orders depends on whether a recipient account is on the same shard. If so, the recipient account is updated locally. Otherwise, a \emph{cross shard message} is created for the recipient shard to update the account (see code in the Appendix for this operation).
The ability to shard each authority has profound implications: increasing the number of shards at each authority increases the theoretical throughput linearly, while latency remains constant. Our experimental evaluation confirms this experimentally (see \Cref{sec:evaluation}).

\subsection{Interfacing with the \sysmain} \label{sec:onchain-state}
We describe the protocols required to couple \sysname with the \sysmain, namely transferring funds from the \sysmain to a \sysname account, and conversely from a \sysname to a \sysmain account.
We refer throughout to the logic on the \sysmain as a \emph{smart contract}, and the primary store of information as the \emph{blockchain}. 
A traditional RTGS would record this state and manage it in conventional ways using databases and stored procedures, rather than a blockchain and smart contracts. We write $\sigma$ for the state of the `blockchain' at a given time, and $\transactions(\sigma)$ for the set of \sysname transactions $T$ already processed by the blockchain.

\para{Smart contract}
The smart contract mediating interactions with the \sysmain requires the following data to be persisted in the blockchain:
\begin{itemize}
\setlength\itemsep{0em}
    \item The \sysname committee composition: a set of authority names and their verification keys.
    \item A map of accounts where each \sysname address is mapped to its current \sysmain state (see below).
    \item The total balance of funds in the smart contract, written $\totalbalance(\sigma)$.
    \item The transaction index of the last transaction that added funds to the smart contract, written $\lasttransactionindex(\sigma)$.
\end{itemize}

\para{Accounts}
The \sysmain state of a \sysname account $x$ consists of the set of sequence numbers of transfers already executed from this account to the \sysmain. This set is called the \emph{redeem log} of $x$ and written $\redeemlog^x(\sigma)$.

\para{Adding funds from the \sysmain to \sysname} \label{sec:libra-to-fastpay}
\Cref{fig:libra-to-fastpay} shows a transfer of funds from the \sysmain to \sysname. 
The owner of the \sysname account (or anyone else) starts by sending a payment to the \sysname smart contract using a \sysmain transaction\ (\ding{202}). This transaction is called a \emph{funding transaction}, and includes the recipient \sysname address for the funds and the amount of value to transfer.

\begin{figure}[t]
\includegraphics[width=.48\textwidth]{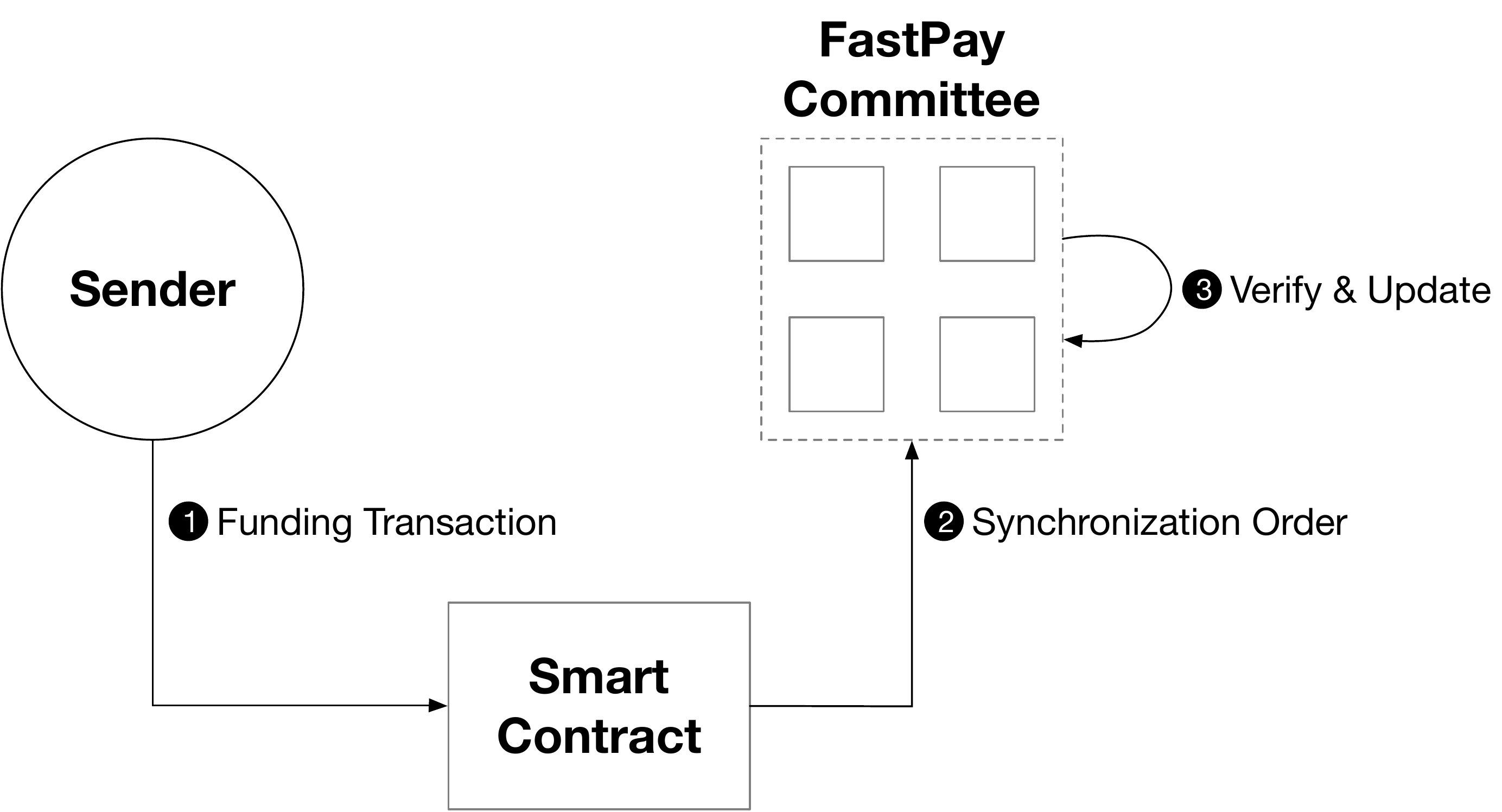}
\caption{Transfer of funds from the \sysmain to \sysname.}
\label{fig:libra-to-fastpay}
\end{figure}

When the \sysmain transaction is executed, the \sysname smart contract generates a \emph{\sysmain event} that instructs authorities of a change in the state of the \sysname smart contract. We assume that each authority runs a full \sysmain client to authenticate such events. For simplicity, we  model such an event as a \emph{(\sysmain) synchronization order} (\ding{203}). The smart contract ensures this event and the synchronization order contain a unique, always increasing, sequential transaction index.

When receiving a synchronization order, each authority \first checks that the transaction index follows the previously recorded one, \second increments the last transaction index in their global state, \third creates a new \sysname account if needed, and \fourth increases the account balance of the target account by the amount of value specified (\ding{204}). \Cref{sec:code-listings} presents the authority algorithm for handling funding transactions.

\para{Transferring funds from \sysname to the \sysmain}
\Cref{fig:fastpay-to-libra} shows a transfer of funds from \sysname to the \sysmain. The \sysname sender signs a \emph{\sysmain transfer order} using their account key and broadcasts it to the authorities (\ding{202}). This is simply a transfer order with a \sysmain address as the recipient.

\begin{figure}[t]
\includegraphics[width=.48\textwidth]{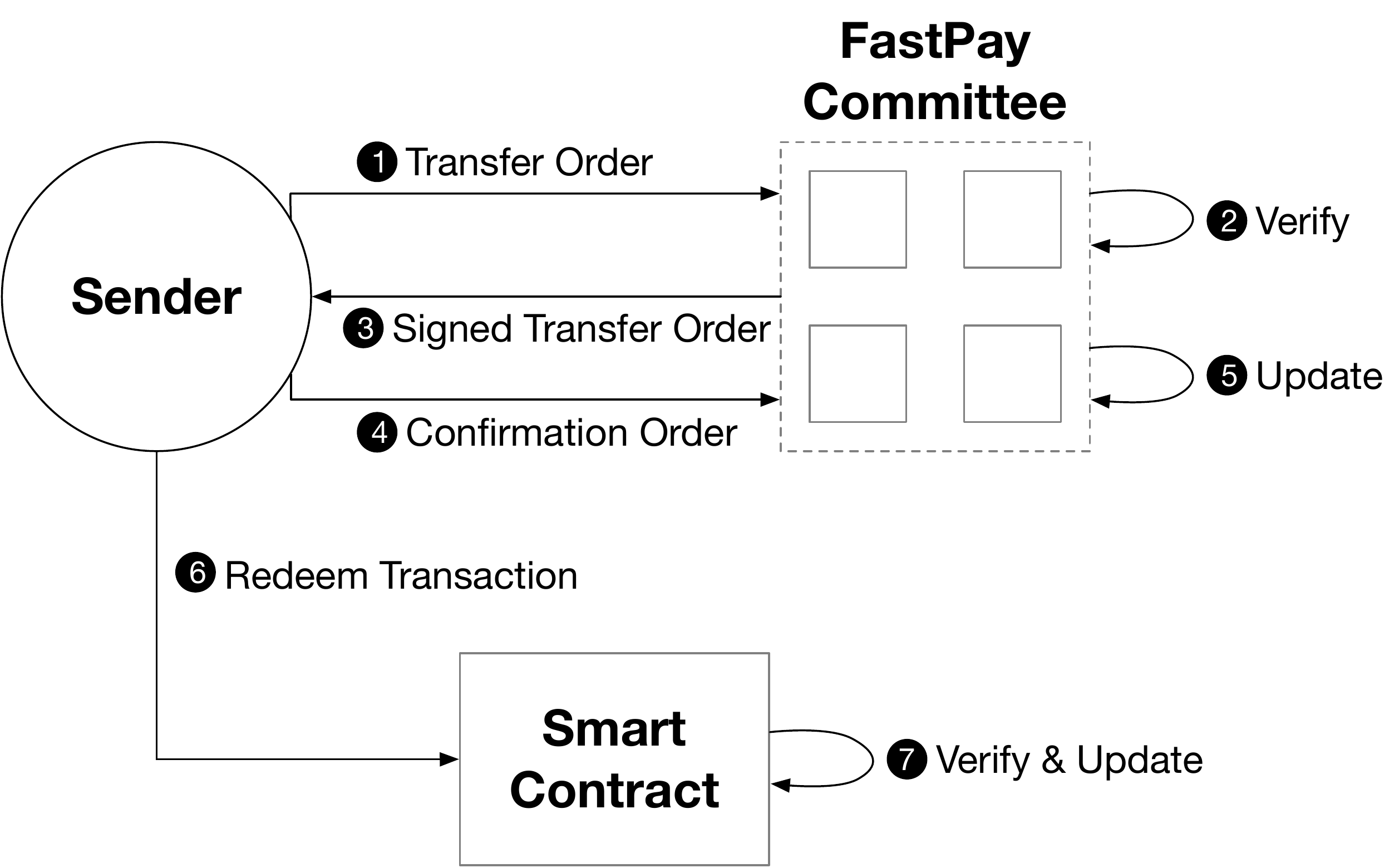}
\caption{Transfer of funds from \sysname to the \sysmain.}
\label{fig:fastpay-to-libra}
\end{figure}

Once a quorum of signatures is reached (\ding{203} and \ding{204}), the sender creates a certified (\sysmain) transfer order, also called a \emph{transfer certificate} for short.
The sender broadcasts this certificate to the authorities to confirm the transaction (\ding{205}) and unlock future spending from this account. When an authority receives a confirmation order containing a certificate of transfer (\ding{206}), it must check \first that a quorum of signatures was reached, and \second that the account sequence number matches the expected one; they \third then set the pending order to None, \fourth increment the sequence number, and \fifth decrease the account balance.

Finally, the recipient of the transfer should send a redeem transaction to the \sysname smart contract on the \sysmain blockchain (\ding{207}). When the \sysname smart contract receives a valid redeem transaction (\ding{208}), it must \first check that the sequence number is not in the \sysmain redeem log of the sender, to prevent reuse; \second update this redeem log; \fourth transfer the amount of value specified from the smart contract into the recipient's \sysmain account.

\subsection{State Recovery and Auditing} \label{sec:audits}

For every account $x$, each authority $\alpha$ must make available the pending order $\pendingconfirmation^x(\alpha)$, the sequence number $\nextsequencenumber^x(\alpha)$, the synchronization orders $\synchronizationlog^x(\alpha)$, and the certificates confirmed so far, indexed by senders (\ie $\confirmedlog^x(\alpha)$) and receivers ($\receivedlist^x(\alpha)$).
Sharing these data fulfills two important roles: \first this lets anyone read the state of any incomplete transfer and drive the protocol all the way to settlement; \second it enables auditing authority states and detecting Byzantine faults (\eg incorrect balance checks).

\subsection{Correct Users and Client Implementation} \label{sec:clients}

A \emph{correct user} owning a \sysname account $x$ follows the correctness rules below:
\begin{enumerate}
\setlength\itemsep{0em}
\item The user sets the sequence number of a new transfer order $O$ to be the next expected integer after the previous transfer (starting with $0$); \ie they sign exactly one transfer order per sequence number;
\item They broadcast the new transfer order $O$ to enough authorities until they (eventually) obtain a certificate $C$;
\item They successfully broadcast the certificate $C$ to a quorum of authorities.
\end{enumerate}

\para{\sysname Client} To address the correctness rules above, our reference implementation of a \sysname client holds and persists the following minimal state:
\begin{itemize}
\setlength\itemsep{0em}
\item The address $x$ and the secret key of the account;
\item The \sysname committee;
\item The sequence number to be used in the next transfer;
\item The transfer order that it signed last, in case it is still pending.
\end{itemize}
In this setting, the available balance of a user account is not tracked explicitly but rather evaluated (conservatively) from the \sysmain transactions and the available logs for incoming transfers and outgoing transfers (\Cref{sec:audits}). Evaluating the balance before starting a transfer is recommended, as signing a transfer order with an excessive amount will block (correct) client implementations from initiating further transfers until the desired amount is available.

\section{Security Analysis} \label{sec:security}

Let $\sigma$ denote the current state of the \sysmain. We define $\funding^x(\sigma)$ as the sum of all the amounts transferred to a \sysname address $x$ from the \sysmain:
\[
\funding^x(\sigma) \,=\, \sum_{\left\{\!\!\!\!\scriptsize\begin{array}{l} T \in \transactions(\sigma)\\ \recipient(T)=x \end{array}\right.} \amount(T)
\]
For simplicity, we write $\sum_C \amount(C)$ when we mean to sum over certified transfer orders: $\sum_{O \text { s.t. } \exists C. O = \val(C)} \amount(O)$.

The results presented in this section are proven in \Cref{sec:proofs}. We start with the main safety invariant of \sysname.

\begin{theorem}[Solvency of \sysname] \label{theorem-s6} At any time, the sum of the amounts of all existing certified transfers from \sysname to the \sysmain cannot exceed the funds collected by all transactions on the \sysmain smart contract:
\[
\sum_{\recipient(C) \in \sysmainaddresses} \!\!\amount(C)
\;\leq\; \sum_x \funding^x(\sigma)
\]
\end{theorem}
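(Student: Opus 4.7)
The plan is to derive the theorem in two steps: first prove a per-account invariant bounding certified outgoing transfers from each FastPay account $x$, then sum over all accounts and cancel the internal FastPay-to-FastPay flows. The target local invariant is $\sum_{C : \sender(C)=x} \amount(C) \leq \funding^x(\sigma) + \sum_{C' : \recipient(C')=x,\ \recipient(C') \in \sysnameaddresses} \amount(C')$, i.e. the certified money out of $x$ is covered by Primary funding into $x$ plus certified FastPay transfers into $x$.

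To set this up I would first use a standard quorum-intersection argument together with the pending-order locking mechanism: an honest authority records $O$ in $\pendingconfirmation^x$ before signing and never signs a second order with the same sequence number, so since every certificate aggregates $2f+1$ signatures (hence at least $f+1$ honest ones), any two certificates for $(x, n)$ must certify the same order. The certified outgoing transfers from $x$ therefore form a gap-free chain $C_0, \ldots, C_N$, one per sequence number. Next I pick any honest $\alpha$ among the signers of the latest $O_N$. Inspecting \texttt{handle\_transfer\_order} gives $\accountbalance^x(\alpha) \geq \amount(O_N)$ at the signing moment, and a separate induction on $\alpha$'s state transitions shows that its balance at any reachable state equals the synchronized Primary funding at $\alpha$ plus the amounts of certificates in $\receivedlist^x(\alpha)$ minus the amounts in $\confirmedlog^x(\alpha)$. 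Since $\nextsequencenumber^x(\alpha) = N$ at signing, and $\alpha$ only advances this counter by processing a valid certificate for the current slot, the uniqueness argument above forces $\confirmedlog^x(\alpha) = \{C_0, \ldots, C_{N-1}\}$. Substituting, and noting that $\alpha$'s synchronized funding is bounded by the global $\funding^x(\sigma)$ while every element of $\receivedlist^x(\alpha)$ is a certified incoming FastPay transfer to $x$, yields the per-account invariant.

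Summing the invariant over all FastPay accounts produces $\sum_{C} \amount(C) \leq \sum_x \funding^x(\sigma) + \sum_{\recipient(C) \in \sysnameaddresses} \amount(C)$, where the outer left sum ranges over all certified transfers; splitting the left by destination and cancelling the FastPay-to-FastPay term on both sides leaves exactly the theorem's inequality. The main obstacle is establishing the honest-authority balance invariant under asynchrony: cross-shard updates, incoming FastPay certificates, and Primary synchronization orders interleave arbitrarily and may arrive long after a certificate has been formed, so one must carefully verify that at every reachable state of an honest $\alpha$ the balance of $x$ really does decompose as synchronized-funding plus received minus confirmed-out, and that $\confirmedlog^x(\alpha)$ is always a prefix of the globally unique chain $(C_n)_n$. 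The paper's remark about temporarily negative balances and the asymmetry between the sender-shard and recipient-shard handlers both feed into this verification and need to be handled without breaking the inductive argument.
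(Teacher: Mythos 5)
Your proposal follows essentially the same route as the paper: certificate uniqueness per $(x,n)$ via quorum intersection and the pending-order lock, a per-honest-authority balance invariant combined with the balance check at signing time and the bound $\funding^x(\alpha) \leq \funding^x(\sigma)$, yielding the per-account bound $\sum_{\sender(C)=x}\amount(C) \leq \funding^x(\sigma) + \sum_{\recipient(C)=x}\amount(C)$, which is then summed over all accounts with the internal FastPay-to-FastPay flows cancelled. The only adjustment is that, because recipient credits are applied asynchronously across shards, the authority invariant holds as the inequality $\accountbalance^x(\alpha) + \sum_{C \in \confirmedlog^x(\alpha)}\amount(C) \leq \funding^x(\alpha) + \sum_{C \in \receivedlist^x(\alpha)}\amount(C)$ rather than as the equality you state---which is precisely the direction your argument needs, so nothing breaks.
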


Next, we describe how receivers of valid transfer certificates can finalize transactions and make funds available on their own accounts (\sysmain and \sysname).

\begin{proposition}[Redeemability of valid transfer certificates to \sysmain] \label{proposition-l1B}
A new valid \sysmain transfer certificate $C$ can always be redeemed by sending a new redeem transaction $T$ to the smart contract.
\end{proposition}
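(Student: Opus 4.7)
The plan is to unpack the definition of ``redeem'' from \Cref{sec:onchain-state} and show that every check performed by the smart contract on a new redeem transaction $T$ for a valid certificate $C$ must pass, so that $T$ indeed transfers $\amount(C)$ to the \sysmain recipient.

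First I would list the preconditions the smart contract enforces on a redeem transaction carrying certificate $C$ (as described right after \Cref{fig:fastpay-to-libra}): \one the certificate $C$ is well-formed and signed by a quorum; \two the sequence number $\sequencenumber(C)$ does not already appear in $\redeemlog^{\sender(C)}(\sigma)$; and \three the smart contract holds enough value, that is $\totalbalance(\sigma) \geq \amount(C)$. Condition \one holds by the assumption that $C$ is a valid \sysmain transfer certificate. Condition \two holds because $C$ is assumed to be \emph{new}, i.e.\ has not been used in any prior redeem transaction on the \sysmain, and an entry is only added to $\redeemlog^{x}(\sigma)$ when a corresponding redeem transaction for $x$ with that sequence number is processed.

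The interesting step is condition \three, which I would reduce to the solvency invariant. By construction, every funding transaction increases $\totalbalance(\sigma)$ by its amount and every previously processed redeem transaction decreases it by the amount it redeems. Thus
\[
\totalbalance(\sigma) \;=\; \sum_x \funding^x(\sigma) \;-\!\!\!\! \sum_{\substack{C' \text{ redeemed}\\ \text{before } \sigma}}\!\!\!\! \amount(C').
\]
Applying \Cref{theorem-s6} and separating the sum of certified \sysmain transfers into those already redeemed and those not yet redeemed (which includes $C$ itself, by the ``new'' hypothesis), we get
\[
\amount(C) \;+\!\!\! \sum_{\substack{C' \text{ redeemed}\\ \text{before } \sigma}}\!\!\! \amount(C') \;\leq\!\!\! \sum_{\recipient(C'') \in \sysmainaddresses}\!\!\! \amount(C'') \;\leq\; \sum_x \funding^x(\sigma),
\]
from which $\amount(C) \leq \totalbalance(\sigma)$ follows immediately. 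Hence \three is satisfied.

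The hard part is really only condition \three, since it is the one step where one must carefully translate between the static solvency inequality of \Cref{theorem-s6} and the dynamic bookkeeping maintained by the smart contract. The other checks are essentially definitional. Finally, since eventual delivery on the \sysmain is assumed (the \sysmain provides liveness, see the threat model), submitting $T$ will eventually be processed and, given that all three conditions pass, $\amount(C)$ will be credited to $\recipient(C)$ and $\sequencenumber(C)$ appended to $\redeemlog^{\sender(C)}$, completing the redemption.
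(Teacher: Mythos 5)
Your proof is correct and follows essentially the same route as the paper's (much terser) argument: combine the solvency bound of \Cref{theorem-s6} with the bookkeeping of the redeem log to conclude that a new certificate passes all the contract's checks and can be redeemed exactly once. The only cosmetic difference is that you re-derive the identity $\totalbalance(\sigma) = \sum_x \funding^x(\sigma) - \sum_{C \in \redeemlog(\sigma)}\amount(C)$ inline, which the paper has already isolated as \Cref{lemma-s2}.
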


\begin{proposition}[Redeemability of valid transfer certificates to \sysname] \label{proposition-l1A}
Any user can eventually have a valid \sysname transfer certificate $C$ confirmed by any honest authority. \end{proposition}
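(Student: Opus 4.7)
The plan is to show that the user can drive the \texttt{handle\_confirmation\_order} routine at the honest authority $\alpha$ to success, possibly after first submitting a list of prerequisite messages that the user can retrieve from other authorities via the auditing interface of \Cref{sec:audits}, or that $\alpha$ eventually receives through its own \sysmain client.

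First I would inspect the preconditions of \texttt{handle\_confirmation\_order} applied to $C$ with sender $x$ and sequence $n = \sequencenumber(\val(C))$. Apart from the routing and signature checks (which pass since $C$ is a valid quorum-signed certificate), the interesting branches are: (a) $\nextsequencenumber^x(\alpha) > n$, in which case $C$ was already applied and the function returns \emph{Ok}, discharging the claim immediately; (b) $\nextsequencenumber^x(\alpha) = n$ and $\accountbalance^x(\alpha) \geq \amount(\val(C))$, which is the nominal success branch; or (c) a sequence mismatch or insufficient balance. The remaining work is to drive $\alpha$ from an arbitrary state to branch (b).

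For the sequence number, the existence of $C$ at sequence $n$ implies that a quorum signed the underlying transfer order, and every honest signer did so only after incrementing $\nextsequencenumber^x$ from $n-1$ to $n$. By induction on $n$, certificates $C_0, \ldots, C_{n-1}$ exist in $\confirmedlog^x$ of at least one honest authority, so the audit interface guarantees the user can fetch them and submit them to $\alpha$ in order. For the balance, I would argue that any honest signer $\beta$ of $C$ satisfied $\accountbalance^x(\beta) \geq \amount(\val(C))$ at signing time, where $\accountbalance^x(\beta)$ equals the sum of credits $\beta$ had applied (through synchronization orders from the \sysmain and confirmed incoming certificates) minus $\sum_{i<n}\amount(\val(C_i))$. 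Each incoming certificate is itself valid and publicly discoverable, so the user fetches and submits all of them to $\alpha$; synchronization orders are delivered to $\alpha$ by its own \sysmain client under the liveness assumption on the \sysmain, so if some are missing the user simply waits. Once $\alpha$ has applied at least the same credits $\beta$ had at signing time, processing $C_0, \ldots, C_{n-1}$ in order succeeds (each residual balance covers the next amount), and the subsequent submission of $C$ clears both checks.

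The main obstacle is the balance step: it is not enough to align sequence numbers, one must show that every credit that contributed to $\beta$'s balance at signing time can be independently reproduced at $\alpha$. Three design choices of \sysname make this go through, and I would state each explicitly: certificates and synchronization orders are self-authenticated and therefore transferable between authorities; credits to an account commute with one another and never fail any check; and the audit interface in \Cref{sec:audits} ensures any confirmed certificate is eventually discoverable by a persistent user. Taken together, these let the user accumulate at $\alpha$ a superset of the state of any honest signer of $C$, after which $C$ is confirmed by $\alpha$.
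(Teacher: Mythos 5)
Your proof diverges from the paper's at the balance check, and that divergence is where the trouble lies. The paper's proof is three sentences: since $C$ exists, at least $f+1$ honest authorities signed $\val(C)$; each of them could only do so with $\nextsequencenumber^x = n$, hence already holds (and keeps available, per \Cref{sec:audits}) the full chain $C_0,\ldots,C_{n-1}$; the user fetches that chain and replays it at $\alpha$, after which $\alpha$ confirms $C$. Crucially, the paper treats the sequence-number check as the \emph{only} precondition to discharge: as stated immediately after the proposition, no other certificates need to be confirmed (e.g.\ to credit the balance of $\sender(C)$ itself) because authorities tolerate temporary negative balances when settling a certificate --- the balance is enforced only when an authority \emph{signs} a transfer order, and the existence of the quorum-signed $C$ already witnesses that check. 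You instead read the \texttt{ensure!(balance >= amount)} line in the confirmation handler as a live precondition and set out to reproduce at $\alpha$ every credit that some honest signer $\beta$ had applied. (The listing does contain that line, so your reading is defensible, but it is at odds with the paper's stated semantics and with the claim, in \Cref{sec:security}, that only the missing certificates of the \emph{sender} are required.)

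The concrete gap in your route is that it is recursive in a way you do not close off. To credit $x$ at $\alpha$ with an incoming certificate $C'$, $\alpha$ must process a confirmation order for $C'$, which --- under your own reading --- requires $\sender(C')$'s account at $\alpha$ to be at the matching sequence number \emph{and} to have sufficient balance, which in turn requires replaying $\sender(C')$'s incoming credits, and so on. Your closing claim that the user can accumulate at $\alpha$ ``a superset of the state of any honest signer'' silently assumes this recursion terminates; it does (every certificate's prerequisites were created strictly earlier, so the dependency relation is well-founded over the finitely many certificates in existence), but that argument is absent, and the whole excursion is unnecessary once one adopts the negative-balance semantics the paper's proof relies on. Your sequence-number argument, your handling of the already-confirmed case, and your observation that the residual balance covers each successive $\amount(C_i)$ once all credits are applied first are all correct.
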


Specifically, in \Cref{proposition-l1A}, the confirmation order for $C$ is guaranteed to succeed for every honest authority $\alpha$, provided that the user first recovers and transfers to $\alpha$ all the \emph{missing certificates required by $\alpha$}, defined as the sequence $C_k \ldots C_{n-1}$ such that $k = \nextsequencenumber^x(\alpha)$, $x = \sender(C)$, $n = \sequencenumber(C)$, $\sender(C_i) = x$ $(k\leq i \leq n-1)$. The fact that no other certificates need to be confirmed (\eg to credit the balance of $\sender(C)$ itself) is closely related to the possibility of (temporary) negative balances for authorities, and justified by the proof of safety in \Cref{sec:proofs}.

Note that having a \sysname certificate confirmed by an authority $\alpha$ only affects $\alpha$'s recipient and the sender's balances (\ie \emph{redeems the certificate}) the first time it is confirmed.

Finally, we state that \sysname funds credited on an account can always be spent.
We write $\receivedlist(x)$ for the set of \emph{incoming} transfer certificates $C$ such that $\recipient(C)=x$ and $C$ is known to the owner of the account\ $x$.

\begin{proposition}[Availability of transfer certificates] \label{proposition-l2}
Let $x$ be an account owned by a correct user, $n$ be the next available sequence number after the last signed transfer order (if any, otherwise $n=0$), and $O$ be a new transfer order signed by~$x$ with $\sequencenumber(O) = n$ and $\sender(O) = x$.

Assume that the owner of $x$ has secured enough funds for a new order $O$ based on their knowledge of the chain $\sigma$, the history of outgoing transfers, and the set $\receivedlist(x)$. That is, formally:
\[
\begin{array}{l}
\bigg( \amount(O) \; + \; \sum_{\left\{\!\!\!\!\scriptsize\begin{array}{l} \sender(C)=x\\ \sequencenumber(C) < n\end{array}\right.}\!\!\!\!\amount(C) \bigg) \\[0.4em]
\quad \leq \; \bigg( \funding^x(\sigma) \; + \; \sum_{C \in \receivedlist(x)} \amount(C) \bigg)
\end{array}
\]
Then, for any honest authority $\alpha$, the user will always eventually obtain a valid signature of $O$ from $\alpha$ after sending the following orders to $\alpha$:
\begin{enumerate}
\setlength\itemsep{0em}
\item A synchronization order from the \sysmain based on the known state $\sigma$;
\item A confirmation order for every $C \in \receivedlist(x)$, preceded by all the missing certificates required by $\alpha$ (if any) for the sender of $C$;
\item Then, the transfer order $O$.
\end{enumerate}
\end{proposition}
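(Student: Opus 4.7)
The plan is to trace how the three groups of messages drive $\alpha$'s local state for account $x$ into a configuration in which every check of \texttt{handle\_transfer\_order} in \Cref{fig:code} passes, at which point $\alpha$ returns a signature on $O$. Since the network is eventually reliable, the ``eventually'' quantifier in the statement follows once this final configuration is established.

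For step~(1), the user forwards the synchronization order (and, if necessary, any earlier sync orders) so that $\alpha$'s transaction-index counter catches up to $\sigma$; after this, $\accountbalance^x(\alpha)$ has been credited by exactly $\funding^x(\sigma)$ through the funding transactions targeting $x$. For step~(2), \Cref{proposition-l1A} applies to every $C \in \receivedlist(x)$: after the user transmits the (possibly empty) sequence of certificates that $\alpha$ is missing for $\sender(C)$, the confirmation order for $C$ succeeds and credits $\accountbalance^x(\alpha)$ by $\amount(C)$. By correctness rules~1--3 the user also possesses their own past certificates $C_0, \ldots, C_{n-1}$; relaying any that $\alpha$ has not yet confirmed brings $\nextsequencenumber^x(\alpha)$ up to $n$ and clears $\pendingconfirmation^x(\alpha)$ to None.

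For step~(3), $O$ then reaches $\alpha$ with a valid signature by $x$ (correctness rule~1), with $\sequencenumber(O) = n = \nextsequencenumber^x(\alpha)$, and with the pending slot free, so the signature, sequence-number, and pending checks all pass. For the balance check, combining the previous paragraph yields
\[
\accountbalance^x(\alpha) \;\geq\; \funding^x(\sigma) \,+\! \sum_{C \in \receivedlist(x)}\!\! \amount(C) \;-\! \sum_{\substack{\sender(C)=x \\ \sequencenumber(C)<n}}\!\! \amount(C),
\]
and the right-hand side is at least $\amount(O)$ by hypothesis. Hence $\alpha$ signs $O$ and the user collects the signature.

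The main obstacle is justifying that the displayed line really is a lower bound on $\accountbalance^x(\alpha)$. The authority $\alpha$ may have credited additional incoming transfers unknown to the user and may have had only a subset of $C_0, \ldots, C_{n-1}$ debited before step~(3), which is precisely why \sysname tolerates transiently negative authority balances (see the discussion in \Cref{sec:security}). Both deviations only \emph{increase} $\accountbalance^x(\alpha)$ relative to the right-hand side: extra incoming credits are additive, while the outgoing sum we subtract is a worst-case upper bound on what $\alpha$ can ever have charged against $x$, given that the user has signed exactly $n$ transfer orders so far. Thus the conservative estimate remains a valid lower bound on the balance at the moment $O$ is handled, and the balance check passes.
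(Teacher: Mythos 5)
Your proof is correct and follows essentially the same route as the paper's, which simply defines the conservative estimate $B$ and asserts that ``a case analysis similar to the proof of \Cref{lemma-s1}'' yields $\accountbalance^x(\alpha) \geq B$ and $\nextsequencenumber^x(\alpha) = n$ eventually --- the case analysis you carry out explicitly. Your added observation that the user must also relay their \emph{own} past certificates $C_0,\ldots,C_{n-1}$ to bring $\nextsequencenumber^x(\alpha)$ to $n$ and clear the pending slot is a detail the paper's proof (and the proposition's enumerated list) leaves implicit, and is a welcome clarification rather than a deviation.
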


\para{Worst-case efficiency of \sysname clients} To initiate a transfer (\Cref{proposition-l2}) or receive funds (\Cref{proposition-l1A}) from a sender account $x$, a \sysname client must address a quorum of authorities. During the exchange, each authority $\alpha$ may require missing certificates $C_k \ldots C_{n-1}$, where $k = \nextsequencenumber^x(\alpha)$ is provided by $\alpha$. In an attempt to slow down the client, a Byzantine authority could return $k=0$ and/or fail to respond at some point. To address this, a client should query each authority $\alpha$ in parallel. After retrieving the sequence number~$k$, the required missing certificates should be downloaded sequentially, in reverse order, then forwarded to~$\alpha$. Given that \sysname client operations succeed as soon as a quorum of authorities completes their exchanges, this strategy ensures client efficiency despite Byzantine authorities.

\section{Implementation} \label{sec:implementation}
We implemented both a \sysname client and a networked multi-core multi-shard \sysname authority in Rust, using Tokio\footnote{\url{https://tokio.rs}} for networking and ed25519-dalek\footnote{\url{https://github.com/dalek-cryptography/ed25519-dalek}} for signatures. For the verification of the multiple signatures composing a certificate we use ed25519 batch verification. To reduce latency we use UDP for \sysname requests and replies, and make the core of \sysname idempotent to tolerate retries in case of packet loss; we also provide an experimental \sysname implementation using exclusively TCP. Currently, data-structures are held in memory rather than persistent storage. 

We implement an authority shard as a separate operating system process with its own networking and Tokio reactor core, to validate the low overhead of intra-shard coordination (through message passing rather than shared memory). We experimented with manually pinning processes to physical cores without a noticeable increase in performance through the Linux \emph{taskset} feature. It seems the Linux OS does a good job in distributing processes and keeping them on inactive cores. We also experimented with a single process multi-threaded implementation of \sysname, using a single Tokio reactor for all shards on multi-core machines. However, this led to significantly lower performance, and therefore we opted for using separate processes even on a single machine for each shard. The exact bottleneck justifying this lower performance---whether at the level of Tokio multi-threading or OS resource management---still eludes us.

The implementation of both server and client is less than 4,000 LOC (of which half are for the networking), and a further 1,375 LOC of unit tests. It required about 2.5 months of work for 3 engineers, and a bit over 1,500 git commits. Keeping the core small required constant re-factoring and its simplicity is a significant advantage of the proposed \sysname design. 
We are open sourcing the Rust implementation, Amazon web services orchestration scripts, benchmarking scripts, and measurements data to enable reproducible results\footnote{
\ifdefined\cameraReady
\url{https://github.com/novifinancial/fastpay}
\else
Link omitted for blind review.
\fi
}.

\section{Evaluation} \label{sec:evaluation}
We evaluate the throughput and latency of our implementation of \sysname through experiments on AWS. We  particularly aim to demonstrate that \first sharding is effective, in that it increases throughput linearly as expected; \second latency is not overly affected by the number of authorities or shards, and remains near-constant, even when some authorities fail; and \third that the system is robust under extremely high concurrency and transaction loads.

\subsection{Microbenchmarks}
We report on microbenchmarks of the single-CPU core time required to process transfer orders, authority signed partial certificates, and certificates. \Cref{table:microbenchark} displays the cost of each operation in micro seconds ($\mu s$) assuming 10 authorities (recall $1\mu s = 10^{-6}s$); each measurement is the result of 500 runs on an Apple laptop (MacBook Pro) with a 2.9 GHz Intel Core i9 (6 physical and 12 logical cores), and 32 GB 2400 MHz DDR4 RAM. The first 3 rows respectively indicate the time to create and serialize \first a transfer order, \second a partial certificate signed by a single authority, and \third a transfer certificate as part of a confirmation order. The last 3 rows indicate the time to deserialize them and check their validity.
The dominant CPU cost involves the deserialization and signature check on certificates ($236\mu s$), which includes the batch verification of the 8 signatures (7 from authorities and 1 from sender). However, deserializing orders ($58\mu s$) and votes ($60\mu s$) is also expensive: it involves 1 signature verification (no batching) and creating 1 signature. These results indicate that a single core shard implementation may only settle just over 4,000 transactions per second---highlighting the importance of sharding to achieve high-throughput.

In terms of networking costs, a transfer order is 146 bytes, and the signed response is 293 bytes. This could be reduced by only responding with a signature (64 bytes) rather than the full signed order, but we chose to echo back the order to simplify client implementations. A full certificate for an order is 819 bytes, and the response---consisting of an update on the state of the \sysname account---is 51 bytes. For deployments using many authorities we can compress certificates by using an aggregate signature scheme (such as BLS~\cite{bls}). However, verification CPU costs of BLS only make this competitive for committees larger than 50-100 authorities. We note that all \sysname message types fit within the common maximum transmission unit of commodity IP networks, allowing requests and replies to be executed using a single UDP packet (assuming no packets loss and 10 authorities).

\begin{table}[t]
\centering
\footnotesize
\begin{tabular}{lrr} \toprule
Measure & Mean ($\mu s$) & Std. ($\mu s$) \\
\midrule
Create \& Serialize Order & 27 & 1 \\
Create \& Serialize Partial Cert. & 27 & 2 \\
Create \& Serialize Certificate & 4 & 0 \\
Deserialize \& Check Order & 58 & 1 \\
Deserialize \& Check Partial Cert. & 60 & 1 \\
Deserialize \& Check Certificate & 236 & 10 \\
\bottomrule
\vspace{0.5mm}
\end{tabular}
\caption{\footnotesize Microbenchmark of single core CPU costs of \sysname operations; average and standard deviation of 500 measurements for 10 authorities.}
\label{table:microbenchark}
\end{table}
%

\subsection{Throughput} \label{sec:throughput-eval}

We deploy a \sysname multi-shard authority on Amazon Web Services (Stockholm, eu-north-1 zone), on a m5d.metal instance. This class of instance guarantees 96 virtual CPUs (48 physical cores), on a 2.5 GHz, Intel Xeon Platinum 8175, and 384 GB memory. The operating system is Linux Ubuntu server 18.04, where we increase the network buffer to about 96MB. 
In all graphs, each measurement is the average of 9 runs, and the error bars represent one standard deviation; all experiments use our UDP implementation.
We measure the variation of throughput with the number of shards. Our baseline experiment parameters are: 4 authorities (for confirmation orders), a load of 1M transactions, and applying back-pressure to allow a maximum of 1000 concurrent transactions at the time into the system (\ie the \emph{in-flight} parameter). We then vary these baseline parameters through our experiments to illustrate their impact on performance.
We select 4 authorities as baseline for our experiments to make it easier to compare with other systems' evaluations~\cite{han2019performance}.

\para{Robustness and performance under high concurrency}
Figures~\ref{fig:x-1000000-z-4-transfer} and~\ref{fig:x-1000000-z-4-confirmation} respectively show the variation of the throughput of processing transfer and confirmation orders as we increase the number of shards per authority, from 15 to 85. We measure these by processing 1M transactions, across 4 authorities. \Cref{fig:x-1000000-z-4-transfer} shows that the throughout of transfer orders slowly increases with the number of shards. The in-flight parameter---the maximum number of transactions that is allowed into the system at any time---influences the throughput by about 10\%, and setting it to 1,000 seems optimal for performance. The degree of concurrency in a system depends on the number of concurrent client requests, and we observe that \sysname is stable and performant even under extremely high concurrency peaks of 50,000 concurrent requests. Afterwards, the Operating System UDP network buffers fill up, and the authority network stacks simply drop the requests.

\Cref{fig:x-1000000-z-4-confirmation} shows that the throughput of confirmation orders initially increases linearly with the number of shards, and then reaches a plateau at around 48 shards. This happens because our experiments are run on machines with 48 physical cores, running at full speed, and 48 logical cores.
The in-flight parameter of concurrent requests does not influence the throughput much, but setting it too low (\eg at 100) does not saturate our CPUs. These figures show that \sysname can support up to 160,000 transactions per second on 48 shards (about 7x the peak transaction rate of the Visa payments network~\cite{visa-performance}) while running on commodity computers that cost less than 4,000 USD/month per authority\footnote{AWS reports a price of 5.424 USD/hour for their \texttt{m5d.metal} instances. \url{https://aws.amazon.com/ec2/pricing/on-demand} (January 2020)}.

\begin{figure}[t]
\includegraphics[width=.47\textwidth]{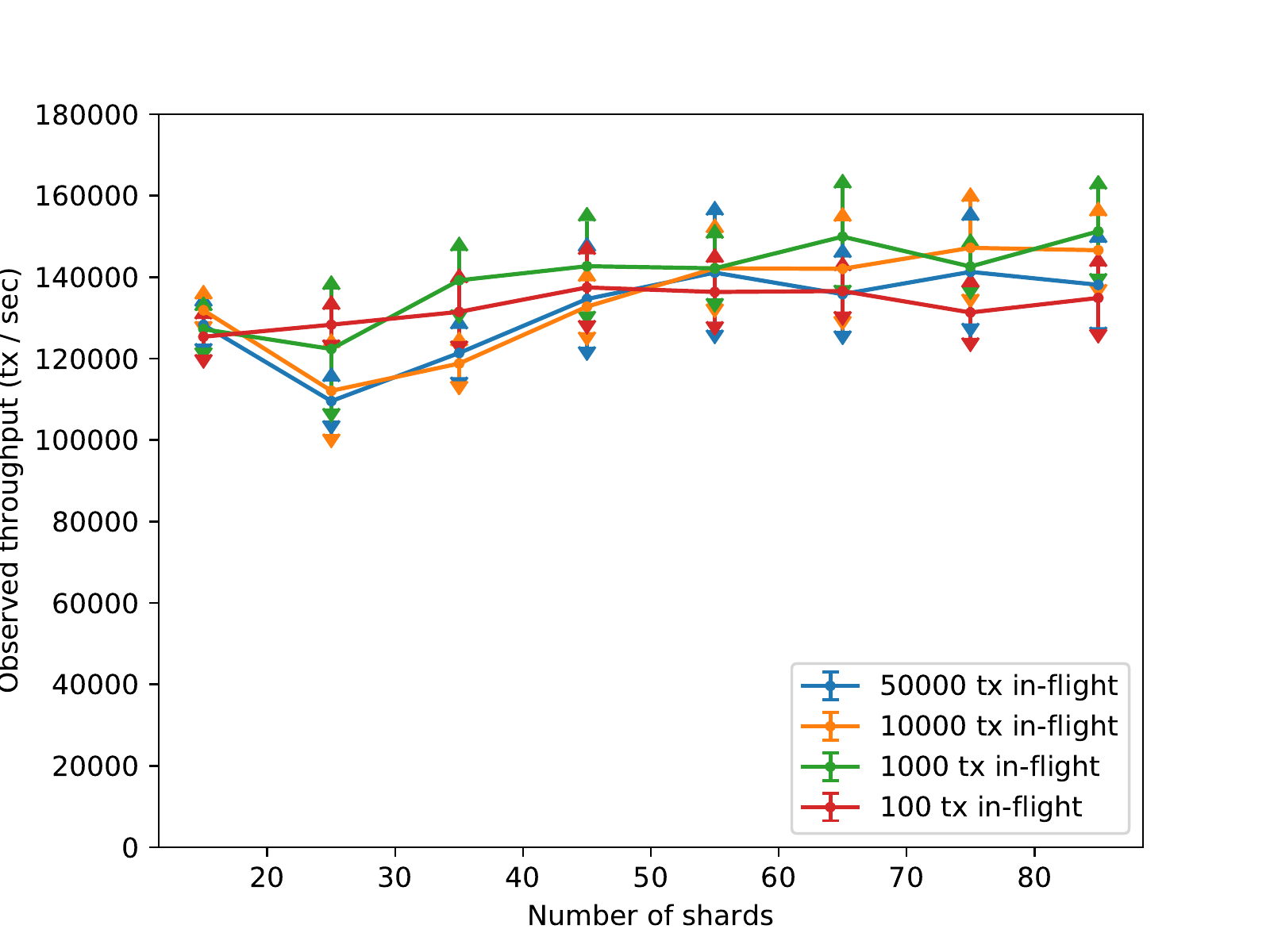}
\caption{\footnotesize Variation of the throughput of transfer orders with the number of shards, for various levels of concurrency (in-flight parameter). The measurements are run under a total load of 1M transactions.}
\label{fig:x-1000000-z-4-transfer}
\end{figure}
\begin{figure}[t]
\includegraphics[width=.47\textwidth]{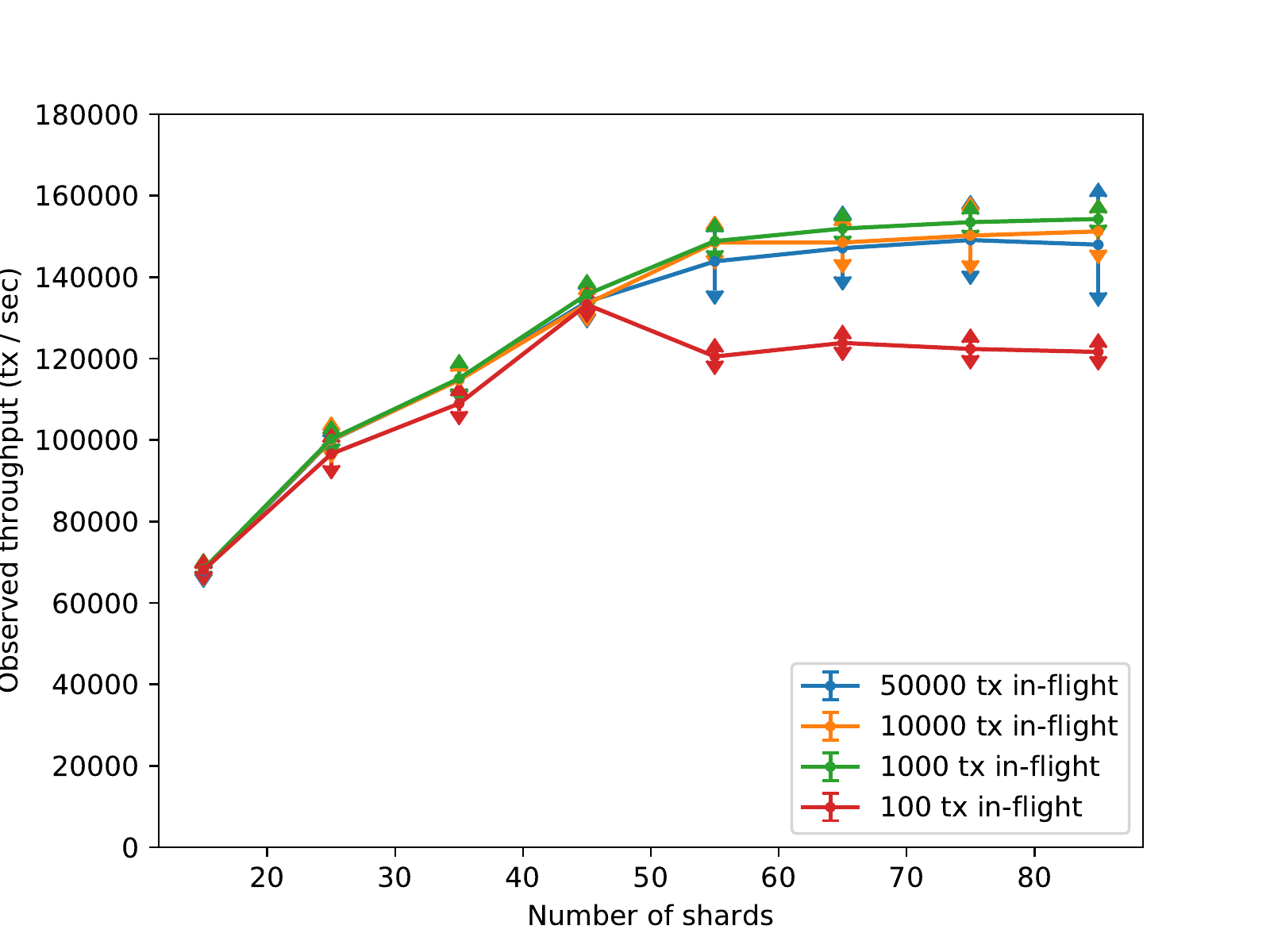}
\caption{\footnotesize Variation of the throughput of confirmation orders with the number of shards, for various levels of concurrency (in-flight parameter). The certificates are issued by 4 authorities, and the measurements are run under a total load of 1M transactions.}
\label{fig:x-1000000-z-4-confirmation}
\end{figure}

\para{Robustness and performance under total system load}
Figures \ref{fig:x-z-1000-4-transfer} and \ref{fig:x-z-1000-4-confirmation} (see \Cref{sec:additional-figures}) show the variation of the throughput of transfer and confirmation orders with the number of shards, for various total system loads---namely the total number of transactions in the test, submitted at the same time.
The goal of this experiment is to analyze the system's performance when experiencing high peaks of utilization. Our results show that the throughput is not affected by the system load.
The tests were performed with 4 authorities, and the client concurrency in-flight parameter set to 1,000.
These figures illustrate that \sysname can process about 160,000 transactions per second even under a total load of 1.5M transactions, and that the total load does not significantly affect performance. These supplement figures \ref{fig:x-1000000-z-4-transfer} and~\ref{fig:x-1000000-z-4-confirmation} that illustrate the concurrent transaction rate (in-flight parameter) also does not influence performance significantly (except when it is too low by under-utilizing the system).

For comparison, in the key experimental work~\cite{han2019performance}, Han~\etal study a number of permissioned systems under a high load. They show that for all of Hyperledger Fabric (v0.6 with PBFT)~\cite{hyperledger-v06}, Hyperledger Fabric (v1.0 with BFT-Smart)~\cite{hyperledger-v10}, Ripple~\cite{ripple-documentation}, and R3 Corda v3.2~\cite{corda-v32}, the successful requests per second \emph{drops to zero} when the transaction rate increases to more than a few thousands transactions per second (notably for Corda only a few hundred). An important exception is Tendermint~\cite{tendermint}, which maintains a processed transaction rate of about 4,000 to 6,000 transactions per second at a high concurrency rate. These findings were confirmed for Hyperledger Fabric that reportedly starts saturating at a rate of 10,000 transactions per second~\cite{nasir2018performance}. In contrast, our results suggest that \sysname stays performant under extremely high rates of concurrent transactions (in-flight parameter) and high work load (total number of transactions processed).

\para{Influence of the number of authorities}
As discussed in \Cref{sec:design}, we expect that increasing the number of authorities only impacts the throughput of confirmation orders (that need to transfer and check transfer certificates signed by $2f+1$ authorities), and not the throughput of transfer orders. \Cref{fig:z-1000000-1000-x-confirmation} confirms that the the throughput of confirmation orders decreases as the number of authorities increases. \sysname can still process about 80,000 transactions per second with 20 authorities (for 75 shards). The measurements are taken with an in-flight concurrency parameter set to 1,000, and under a load of 1M total transactions. We note that for higher number of authorities, using an aggregate signature scheme (\eg BLS~\cite{bls}) would be preferable since it would result in constant time verification and near-constant size certificates. However, since we use batch verification of signatures, the break even point may be after 100 authorities in terms of verification time.

\begin{figure}[t]
\includegraphics[width=.47\textwidth]{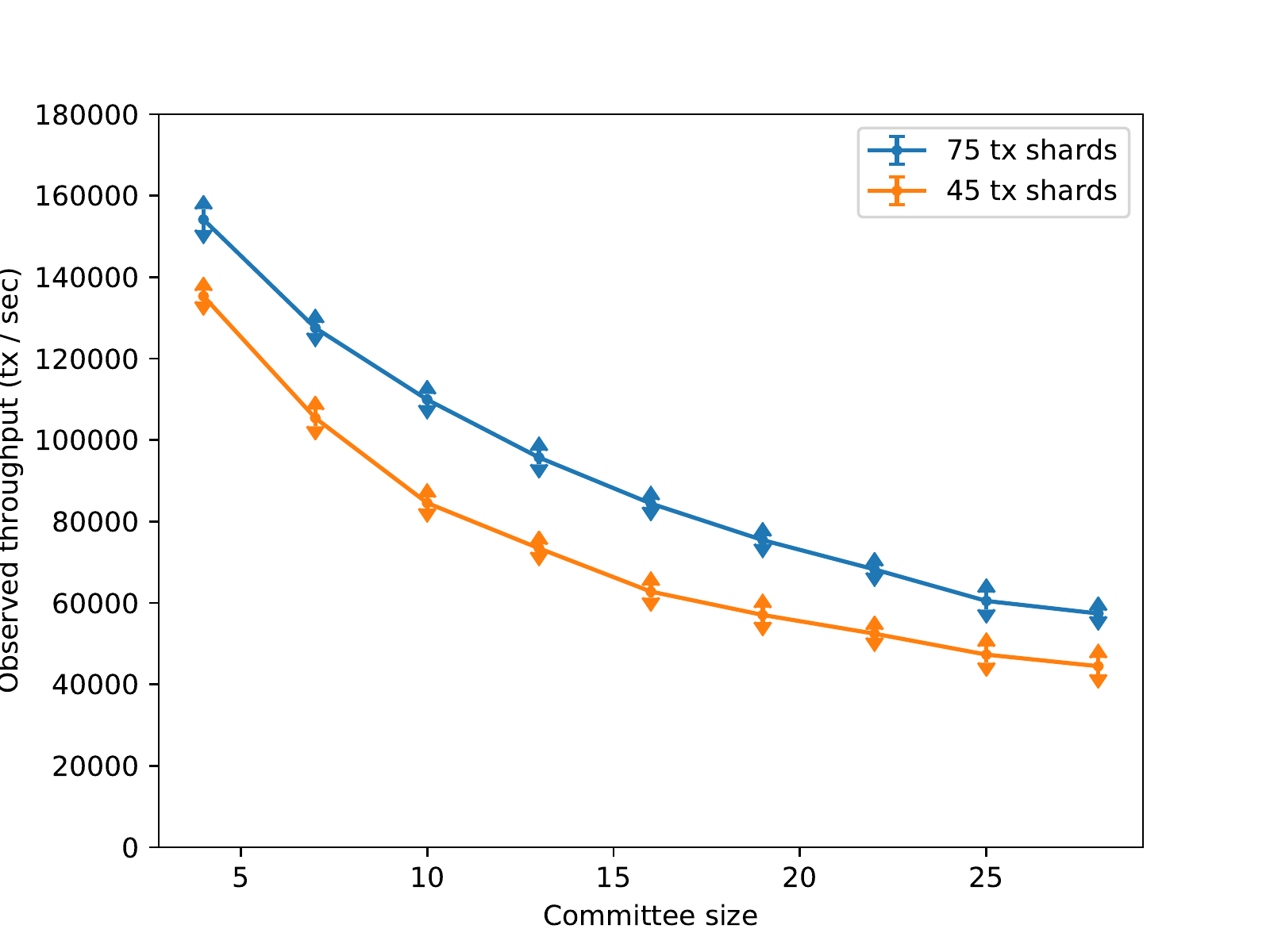}
\caption{\footnotesize Variation of the throughput of confirmation orders with the number of authorities, for various number of shards. The in-flight parameter is set to 1,000 and the system load is of 1M transactions.}
\label{fig:z-1000000-1000-x-confirmation}
\end{figure}
%

\subsection{Latency}
We measure the variation of the client-perceived latency with the number of authorities. We deploy several \sysname multi-shard authorities on Amazon Web Services (all in Stockholm, eu-north-1 zone), each on a m5d.8xlarge instance. This class of instance guarantees 10Gbit network capacity, on a 3.1 GHz, Intel Xeon Platinum 8175 with 32 cores, and 128 GB memory. The operating system is Linux Ubuntu server 16.04. Each instance is configured to run 15 shards.
The client is run on an Apple laptop (MacBook Pro) with a 2.9 GHz Intel Core i9 (6 physical and 12 logical cores), and 32 GB 2400 MHz DDR4 RAM; and connected to a reliable WIFI network. We run experiments with the client in two different locations; \first in the U.K. (geographically close to the authorities, same continent), and \second in the U.S. West Coast (geographically far from the authorities, different continent).
Each measurement is the average of 300 runs, and the error bars represent one standard deviation; all experiments use our UDP implementation.

We observe that the client-authority WAN latency is low for both transfer and confirmation orders; the latency is under 200ms when the client is in the U.S. West Coast, and about 50ms when the client is in the U.K.
\Cref{fig:latency-transfer} illustrates the latency between a client creating and sending a transfer order to all authorities, and receiving sufficient signatures to form a transfer certificate (in our experiment we wait for all authorities to reply to measure the worse case where $f$ authorities are Byzantine). The latency is virtually constant as we increase the number of authorities, due to the client emitting orders asynchronously to all authorities and waiting for responses in parallel. 

\Cref{fig:latency-confirmation} illustrates the latency to submit a confirmation order, and wait for all authorities to respond with a success message. It shows latency is virtually constant when increasing the number of authorities. This indicates that the latency is largely dominated by the network (and not by the verification of certificates). However, since even for 10 authorities a \sysname message fits within a network MTU, the variation is very small. Due to our choice of using UDP as a transport there is no connection initiation delay (as for TCP), but we may observe packet loss under very high congestion conditions. Authority commands are idempotent to allow clients to re-transmit to overcome loss without sacrificing safety.

\begin{figure}[t]
\includegraphics[width=.47\textwidth]{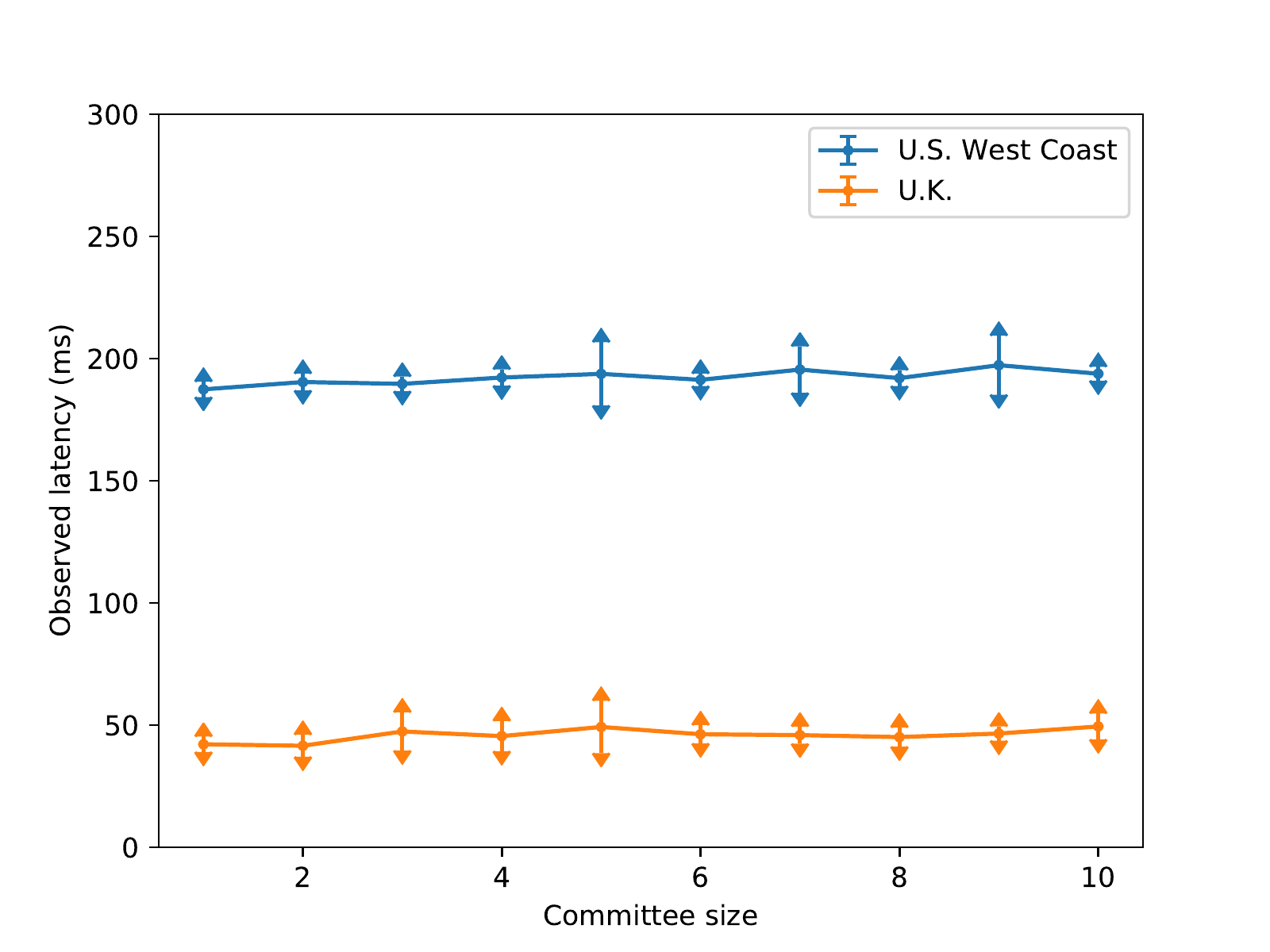}
\caption{\footnotesize Variation of the latency of transfer orders with the number of authorities, for various locations of the client.}
\label{fig:latency-transfer}
\end{figure}
\begin{figure}[t]
\includegraphics[width=.47\textwidth]{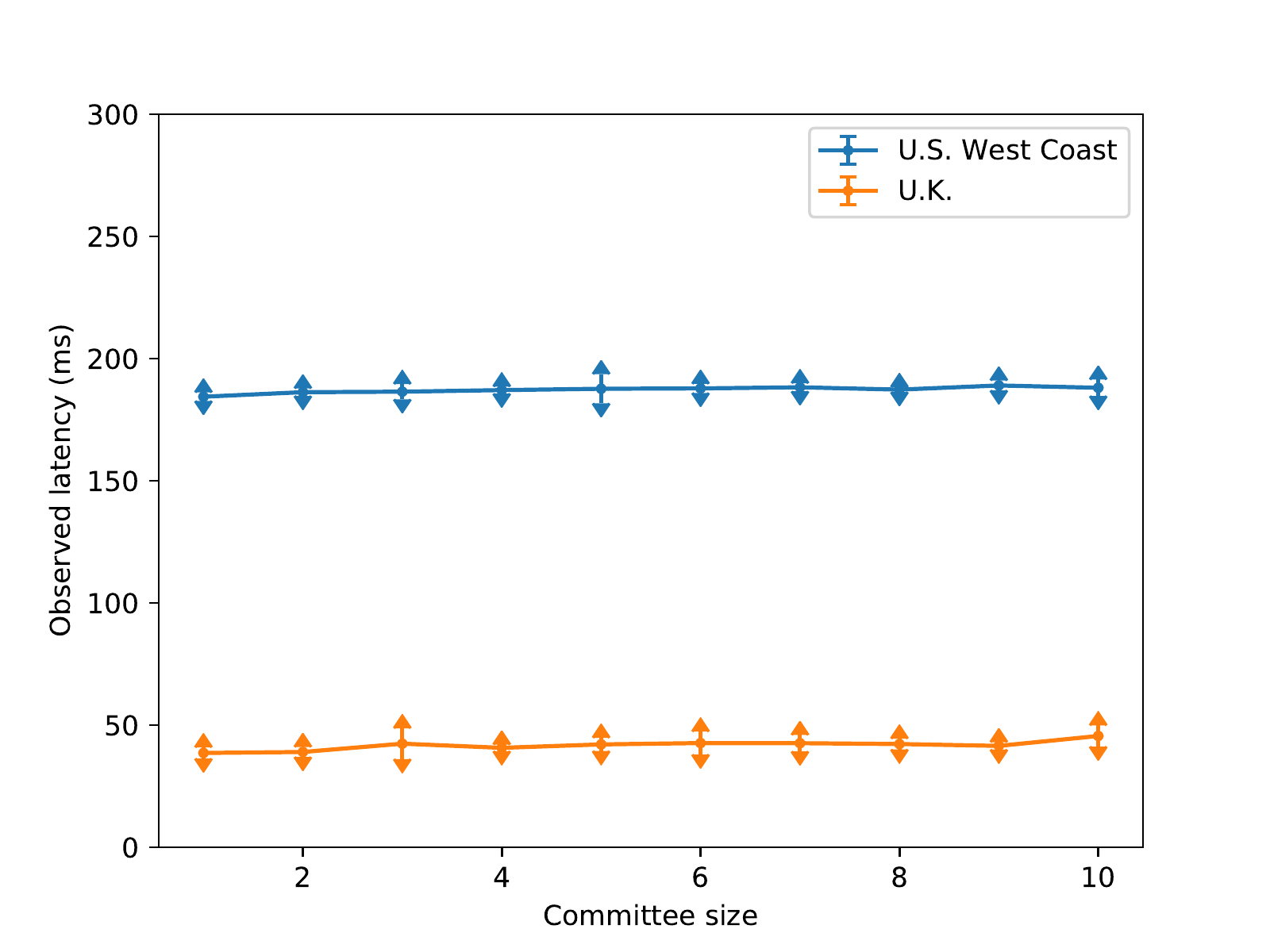}
\caption{\footnotesize Variation of the latency of confirmation orders with the number of authorities, for various locations of the client.}
\label{fig:latency-confirmation}
\end{figure}

\para{Performance under failures} Research literature suggests permissioned blockchains based on (often leader-based) consensus suffer an enormous performance drop when some authorities fail~\cite{DBLP:conf/icdcs/LeeSHKN14}. We measure the effect of authority failure in \sysname and show that latency is not affected when $f$ or fewer authorities are unavailable.

We run our baseline experimental setup (10 authorities distributed over 10 different AWS instances), when a different number of authorities are not available for $f = 0\ldots3$. We measure the 
\begin{wraptable}{r}{0.45\columnwidth}
\centering
\footnotesize
\begin{tabular}{lr}  
\toprule
$f$    & Latency\\
& (ms $\pm$ std) \\
\midrule
0 & $43 \pm 2$\\
1 & $41 \pm 3$\\
2 & $44 \pm 4$\\
3 & $47 \pm 2$\\
\bottomrule
\end{tabular}
\caption{\footnotesize Crash-failure Latency.}
\label{tab:fail}
\end{wraptable}
latency experienced by a client  on the same continent (Europe), sending a transfer order until it forms a valid transfer certificate. \Cref{tab:fail} summarizes the mean latency and standard deviation for different $f$. 
There is no statistically significant difference in latency, no matter how many tolerable failures \sysname experiences (up to $f \leq 3$ for 10 authorities). We also experimented with killing authorities one by one with similar results, up to $f > 3$ when the system did observably lose liveness as expected. The underlying reason for the steady performance under failures is \sysname's lack of reliance on a leader to drive the protocol.

\section{Limitations and Future Work} \label{sec:discussion}

\para{Threats to validity of experiments} 
Our experiments represent the best case performance, for a set number of authorities and shards, as they are performed in laboratory conditions. In particular, real-wold transactions may have the same sender account, which would prevent them from being executed in parallel. Further, the throughput evaluation places transaction load on an authority through the local network interface, and therefore does not take fully into account the operating system networking costs of a full WAN stack. Further, our WAN latency experiments were performed against authorities with very low-load. Finally, the costs of persisting databases to storage are not taken into account when measuring latency and throughput (we leave the implementation of low-latency persistent storage to future work).

\para{Integrating privacy} 
As presented, \sysname exposes information about all transactions, namely the sender-recipient accounts and the amounts transferred, as well as the timings of those transfers. Fully integrating stronger privacy protections is a separate research project. However, we want to highlight that the architecture of \sysname is highly compatible with threshold issuance selective disclosure credential designs, such as Coconut~\cite{coconut}. In those schemes a threshold of authorities can jointly sign a credential that the user can subsequently randomize and present to execute a payment. 
Implementing hidden balances, like MinbleWinble~\cite{mimblewimble} and combining them with credentials should be possible---but beyond the scope of the present work.

\para{Checkpointing, authority, and key rotation}
The important enabler for the good performance of \sysname, but also an important limitation, is the fact that authorities do not need to reach consensus on the state of their databases. We demonstrate that payments are secure in this context, but various system maintenance operations are harder to implement. For example, checkpointing the state of all accounts in the systems, to compress the list of stored certificates would be beneficial, but cannot be straightforwardly implemented without consensus. Similarly, it would be beneficial for authorities to be able to rotate in and out of the committee, as well as to update their cryptographic signature keys. Due to the lack of tight synchronization between authorities there is no natural point that guarantees they all update their committees at the same logical time. Further, our proofs of liveness under asynchrony presume that transfer orders and certificates that were once valid, will always be valid. Integrating such governance features into \sysname will require careful design to safely leverage either some timing (synchrony) assumptions or use a more capable (but maybe lower performance) consensus layer, such as one facilitated by the \sysmain.

\para{Economics and fees} Some cost to insert transactions into a system (like fees in Bitcoin), allows for sound accounting and prevents Denial of Service attacks by clients over-using an open system. The horizontal scalability of \sysname alleviates somehow the need to integrate such a scheme, since issues of capacity can be resolved by increasing its capacity through more shards (as well as deploying network level defenses). However, if there was a need to implement fees for using \sysname we would not recommend using micro-payments associated with each payment like in Bitcoin. We would rather recommend allowing a client to deposit some payment into a service account with all authorities, and then allow them to deduct locally some of this fee for any services rendered (namely any signed transfer order or confirmation order processed). In practical terms, the variable costs of processing transactions in \sysname is low. There is no artificial shortage due to lack of scalability, and a flat periodic fee on either senders or recipients might be sufficient to support operations (rather than a charge per transaction).

\section{Related Works} \label{sec:related}
ABC~\cite{abc} is an asynchronous payment system which can be sharded similarly to \sysname to achieve arbitrary throughput. ABC proposes a relaxed notion of consensus where termination is only guaranteed if the sender of the transaction is honest, and similarly to \sysname dishonest users may lock their own account.
\sysname and ABC share similar features but are designed for different purposes. \sysname can be deployed as a RTGS or a (permissionned) side-infrastructure, and heavily focuses on implementation and evaluation; ABC is a permissionless standalone system providing great details on how to run it as an open system based on proof-of-stake~\cite{sok-consensus}, but provides no implementation or evaluation.

Other systems similar to \sysname are Astro~\cite{astro} and Brick~\cite{brick}, which were both developed concurrently to \sysname. Astro relies on \emph{Byzantine reliable broadcast}~\cite{cachinBook} which adds \emph{totality}~\cite{cachinBook} to Byzantine consistent broadcast. This allows Astro to guarantee availability even to incorrect users (while \sysname only guarantees it for corrects users, see \Cref{sec:properties}) at the cost of one extra broadcast step among the authorities. Astro is designed to be a standalone system and does not natively integrate into a \sysmain infrastructure, and does not offer security proofs.
As \sysname, Brick uses Byzantine consistent broadcast as underlying primitive and positions itself as a payment channel. As such, Brick offers details on how to efficiently open and close channels, provides proofs of fraud in case authorities misbehave, and presents detailed incentive mechanisms to keep authorities honest. However, it does not present an implementation or evaluation (it only presents a quick latency benchmark), and only works with two users. In contrast, \sysname focuses on high performance, robustness, and scalability; it provides a scalable architecture and is specifically designed to handle high transaction volumes, from a high number of users.

We now compare \sysname with traditional payment systems and some relevant crypto-currencies.

\para{Traditional payment systems} In the context of traditional payment systems \sysname is a real-time gross settlement system (RTGS)~\cite{rtgs-1,rtgs-2}---payments are executed in close to real-time, there is no netting between participants, and  the transfer of funds is final upon the full payment protocol terminating. All payments are pre-funded so there is no need to keep track of credit or liquidity, which makes the design vastly simpler.

\sysname, from an assurance and performance perspective is significantly superior to deployed RTGS systems: it \first implements a fully Byzantine fault tolerant architecture (established systems rely on master-slave configurations to only recover from few crash failures), \second has higher throughput (as compared, for example with the TARGET2~\cite{target2} European Central Bank RTGS systems that has a target throughput of 500 tx/sec), and \third has faster finality (as compared to TARGET2 providing finality of a few seconds). Since \sysname allows for fast gross settlement, participants are not exposed to credit risk, as in the case of retail payment systems such as VISA and Mastercard (that use daily netting, and have complex financial arrangements to mitigate credit risk in case of bank default). Furthermore, it does achieve both throughput and latency, comparable to those systems combined---about 80,000 tx/sec at peak times, when adding up the throughput of Visa and Mastercard together~\cite{visa-performance,mastercard-performance}.

On the downside, \sysname lacks certain features of mature RTGS systems: in particular it does not support Delivery-on-Payment transactions that atomically swap securities when payment is provided, or Payment-versus-Payment, that atomically swap amounts in different currencies to minimize the risk of foreign exchange transactions. These require atomic operations across accounts controlled by different users, and would therefore require extending \sysname to support them (namely operations with consensus number of 2 per Herlihy~\cite{herlihy1991wait}), which we leave for future work.

\para{Crypto-currencies} \sysname provides high assurance in the context of Byzantine failures within its infrastructure. In that respect it is comparable with systems encountered in the space of permissioned blockchains and crypto-currencies, as well as their eco-system of payment channels. \sysname is permissioned in that the set of authorities managing the system is closed---in fact we do not even propose a way to rotate those authorities and leave this to future work. Qualitatively, \sysname differs from other permissioned (or permissionless) crypto-currencies in a number of ways: it is secure under full network asynchrony (since it does not require or rely on atomic broadcast channels or consensus, but only consistent broadcast)---leading to higher performance. This direction was explored in the past in relation to central bank crypto-currency systems~\cite{RSCoin} and high performance permissionless systems~\cite{rocket2018snowflake}. It was recently put on a formal footing by Guerraroui~\etal~\cite{consensus-number}. Our work extends this theory to allow increased concurrency, correctness under sharding, and rigorous interfacing with external settlement mechanisms. \sysname achieves auditability through a set of certificates signed by authorities rather than a sequential log of actions (blockchain), which would require authorities to reach agreement on a common sequence. 

Quantitatively, compared with other permissioned systems \sysname is extremely performant. HyperLedger Fabric~\cite{hyperledger} running with 10 nodes achieves about 1,000 transactions per second and a latency of about 10 seconds~\cite{nasir2018performance}; and Libra~\cite{libra} and Corda~\cite{corda, corda-performance} achieve similar performance.
JP Morgan developed a digital coin built from the Ethereum codebase, which can achieve about 1,500 transactions per second with four nodes, and imposing a block time of 1 second~\cite{baliga2018performance}.
Tendermint~\cite{tendermint} reportedly achieves 10,000 transactions per second with 4 nodes, with a few seconds latency~\cite{tendermint-performance}. However, as we discussed in \Cref{sec:evaluation}, many of those systems see their performance degrading dramatically under heavy load---whereas \sysname performs as expected.

\sysname can be used as a side chain of any crypto-currency with reasonable finality guarantees, and sufficient programmability. As compared to bilateral payment channels it is superior in that it allows users to pay anyone in the system without locking liquidity into the bilateral channel, and is fully asynchronous. However, \sysname does rely on an assumption of threshold non-Byzantine authorities for safety and liveness, whereas payment channel designs only rely on network synchrony for safety and liveness (safety may be lost under conditions of asynchrony). As compared to traditional payment channel networks (such as the lighting network~\cite{lightning}) \sysname is simpler and does not require complex path finding algorithms~\cite{lightning, flare, grunspan2018ant, sivaraman2018routing}. 

\section{Conclusion} \label{sec:conclusion}

\sysname is a settlement layer based on consistent broadcast channels, rather than full consensus. The \sysname design leverages the nature of payments to allow for asynchronous payments into accounts, and optional interactions with an external \sysmain to build a practical system, while providing proofs of both safety and liveness; it also proposes and evaluates a design for sharded implementation of authorities to horizontally scale and match any throughput need. 

The performance and robustness of \sysname is beyond and above the state of the art, and validates that moving away from both centralized solutions and full consensus to manage pre-funded retail payments has significant advantages. Authorities can jointly process tens of thousands of transactions per second (we observed a peak of 160,000 tx/sec) using merely commodity hardware and lean software. A payment confirmation latency of less than 200ms between continents make \sysname practical for point of sale payments---where goods and services need to be delivered fast and in person. Pretty much instant settlement enables retail payments to be freed from intermediaries, such as banks payment networks, since they eliminate any credit risk inherent in deferred netted end-of-day payments, that underpin today most national Fast Payment systems~\cite{bolt2014fast}. Further, \sysname can tolerate up to one-third of authorities crashing or even becoming Byzantine without losing either safety or liveness (or performance). This is in sharp contrast with existing centralized settlement layers operating on specialized mainframes with a primary / backup crash fail strategy (and no documented technical strategy to handle Byzantine operators). Surprisingly, it is also in contrast with permissioned blockchains, which have not achieved similar levels of performance and robustness yet, due to the complexity of engineering and scaling full \bftlong consensus protocols.

\ifdefined\cameraReady
\begin{acks}
This work is funded by Novi, a Facebook subsidiary. The authors would like to thank Dahlia Malkhi for her constant feedback on this project, and Kostas Chalkias for feedback on late manuscript. We also thank the Novi Research and Engineering teams for valuable feedback.
\end{acks}
\fi

\bibliographystyle{ACM-Reference-Format}
\bibliography{references}

\appendix
\section{Proofs of Security} \label{sec:proofs}
We now prove the results presented in \Cref{sec:security}.

\subsection{Additional Notations}
We define $\funding^x(\alpha)$ as the sum of all the amounts received from the \sysmain by a \sysname address $x$, as seen at a given time by an authority\ $\alpha$:
\[\funding^x(\alpha) = \sum_{\sync \in \synchronizationlog^x(\alpha)} \amount(\sync)\]

\subsection{Safety}

\begin{lemma}[Transfer certificate uniqueness] \label{lemma-s0}
If
\[
\left\{
\begin{array}{l}
\sender(C) = \sender(C'), \; \text{and} \\
\sequencenumber(C) = \sequencenumber(C')
\end{array}
\right.
\]
then $C$ and $C'$ certify the same transfer order:
\[\val(C) = \val(C') \]
\end{lemma}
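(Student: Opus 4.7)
The plan is to reduce the lemma to a quorum-intersection argument combined with a single-signing invariant maintained by every honest authority. Specifically, I will argue \emph{(a)} that every honest authority $\alpha$ emits at most one signature per (sender, sequence number) pair on transfer orders, and then \emph{(b)} invoke the standard intersection property that any two quorums of $2f+1$ authorities, out of $3f+1$ of which at most $f$ are Byzantine, share at least one honest authority. Combined, these yield $\val(C) = \val(C')$ directly.

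For step (a), I would inspect \texttt{handle\_transfer\_order} in \Cref{fig:code}. An honest $\alpha$ produces a fresh signature on a transfer order $O$ with $\sender(O) = x$ and $\sequencenumber(O) = n$ only when, at that moment, $\pendingconfirmation^x(\alpha) = \text{None}$ and $\nextsequencenumber^x(\alpha) = n$; the handler then sets $\pendingconfirmation^x(\alpha) \leftarrow \alpha.\mathrm{sign}(O)$. For $\alpha$ to later emit a signature on a \emph{different} order $O'$ with the same sender $x$ and sequence number $n$, the same preconditions must hold again. However, while $\pendingconfirmation^x(\alpha)$ still stores the signed $O$, the branch \texttt{ensure!(pending.transfer == O')} forces $O' = O$ (and no new signature is emitted in any case). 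The only way $\pendingconfirmation^x(\alpha)$ can later be cleared is via \texttt{handle\_confirmation\_order}, which simultaneously increments $\nextsequencenumber^x(\alpha)$ to at least $n+1$. Since $\nextsequencenumber^x(\alpha)$ is monotonically non-decreasing, the precondition $\nextsequencenumber^x(\alpha) = n$ can never recur, so $\alpha$ never signs a second distinct order at sequence $n$.

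For step (b), the security assumption yields that any two quorums of size $2f+1$ overlap in at least $f+1$ authorities, at least one of whom is honest. Applied to the quorums witnessing $C$ and $C'$, this produces a common honest signer $\alpha^\ast$ whose signature appears on both $\val(C)$ and $\val(C')$. By step (a), $\val(C) = \val(C')$, which is the statement of the lemma.

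The main obstacle is step (a): the argument rests on the precise interleaving of \texttt{handle\_transfer\_order} and \texttt{handle\_confirmation\_order} at a single authority, so I would state and justify explicitly the two monotone invariants that $\nextsequencenumber^x(\alpha)$ never decreases and that $\pendingconfirmation^x(\alpha)$ is cleared only together with a strict increment of $\nextsequencenumber^x(\alpha)$. Once these invariants are fixed, the quorum-intersection half of the proof reduces to a single appeal to the $3f+1$/$2f+1$ arithmetic.
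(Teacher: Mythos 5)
Your proof is correct and follows essentially the same route as the paper: quorum intersection yields a common honest signer, and the single-signature-per-sequence-number invariant of honest authorities forces $\val(C)=\val(C')$. The paper simply asserts that invariant ``by construction,'' whereas you justify it by tracing the interaction of \texttt{handle\_transfer\_order} and \texttt{handle\_confirmation\_order}; that added detail is accurate and harmless.
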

\begin{proof}
Both certificates $C$ and $C'$ are signed by a quorum of authorities. By construction, any two quorums intersect on at least one honest authority. Let $\alpha$ be an honest authority in both quorums. $\alpha$ signs at most one transfer order per sequence number, thus $C$ and $C'$ certify the same transfer order.
\end{proof}

\begin{lemma}[\sysname invariant] \label{lemma-s1}
For every honest authority $\alpha$, for every account $x$, it holds that
\[
\begin{array}{l}
\bigg( \accountbalance^x(\alpha) \;+\; \sum_{C \in \confirmedlog^x(\alpha)} \amount(C) \bigg)\\[0.6em]
\quad \leq \bigg( \funding^x(\alpha) \;+\; \sum_{C \in \receivedlist^x(\alpha)} \amount(C) \bigg)
\end{array}
\]
Besides, if $n = \nextsequencenumber^x(\alpha)$, we have that $\confirmedlog^x(\alpha) = \{C_0 \ldots C_{n-1}\}$ for some certificates $C_k$ such that $\sequencenumber(C_k) = k$ and $\sender(C_k)=x$.
\end{lemma}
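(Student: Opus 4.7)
I will prove both parts of the lemma simultaneously by induction on the execution trace of the honest authority $\alpha$. For the base case, each account $x$ is created with $\accountbalance^x(\alpha) = 0$, $\nextsequencenumber^x(\alpha) = 0$, empty $\confirmedlog^x(\alpha)$, empty $\synchronizationlog^x(\alpha)$, and the derived set $\receivedlist^x(\alpha)$ is initially empty as well; hence both sides of the displayed inequality equal $0$ and the structural claim holds vacuously.

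\textbf{Inductive step.} I will enumerate the state-changing operations an honest $\alpha$ may perform and show each preserves the invariant. Only four operations touch the quantities of interest. First, \texttt{handle\_transfer\_order} only writes to $\pendingconfirmation^x(\alpha)$, so leaves every relevant quantity unchanged. Second, a synchronization order applied to $x$ appends an entry to $\synchronizationlog^x(\alpha)$ and adds the same amount to $\accountbalance^x(\alpha)$; both LHS and RHS grow by the same value, and the structural claim is unaffected. Third, \texttt{handle\_confirmation\_order} for a certificate $C$ with $\sender(C) = y$ either hits the stale branch $\nextsequencenumber^y(\alpha) > \sequencenumber(C)$ and is a no-op, or triggers the update that subtracts $\amount(C)$ from $\accountbalance^y(\alpha)$, increments $\nextsequencenumber^y(\alpha)$, and appends $C$ to $\confirmedlog^y(\alpha)$. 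For the $y$-instance of the invariant, the balance decrease is exactly cancelled by the new summand $\amount(C)$, so the LHS is unchanged and the RHS is unchanged; the structural claim is preserved because the code enforces $\sequencenumber(C) = \nextsequencenumber^y(\alpha)$ before the append, while \Cref{lemma-s0} guarantees uniqueness of the certificate at that sequence number. Fourth, the recipient-side effect (either the same-shard branch executed in the same handler, or the cross-shard update handler firing later) adds $\amount(C)$ to $\accountbalance^z(\alpha)$ for $z = \recipient(C)$ when the recipient is a \sysname address.

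\textbf{Handling cross-shard asynchrony.} The subtle point, and the main obstacle, is that $\receivedlist^z(\alpha)$ is a derived set: the moment $C$ is appended to $\confirmedlog^y(\alpha)$, the certificate already belongs to $\receivedlist^z(\alpha)$, even if the recipient shard has not yet processed the cross-shard message that raises $\accountbalance^z(\alpha)$. This is precisely why the invariant is stated as an inequality rather than an equality: between the sender-shard step and the recipient-shard step, the RHS at $z$ grows by $\amount(C)$ while the LHS stays put, strengthening $\leq$; when the cross-shard update later applies at $z$, the LHS grows by the same amount and the inequality is restored to the same slack it had before. When instead $\recipient(C)$ is a \sysmain address, no recipient-side step occurs at all, and the sender-side bookkeeping alone leaves both sides of the $y$-instance unchanged.

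\textbf{Conclusion.} Iterating over the enumerated cases shows that every honest-authority transition preserves the displayed inequality and the structural description of $\confirmedlog^x(\alpha)$, completing the induction. The only non-mechanical ingredient is \Cref{lemma-s0} for the structural claim and the observation above about the derived nature of $\receivedlist^x(\alpha)$ for the arithmetic claim; the rest is a case split over the handlers in \Cref{fig:code}.
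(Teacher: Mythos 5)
Your proof is correct and follows essentially the same route as the paper's: a case analysis over the authority's state-changing operations showing each one preserves the inequality, with the cross-shard delay in crediting the recipient explaining why the relation is $\leq$ rather than $=$. Your version is simply more explicit about the induction and the derived nature of $\receivedlist^x(\alpha)$ (and your appeal to \Cref{lemma-s0} is harmless but not needed, since the structural claim only asserts existence of \emph{some} certificate per sequence number, which the sequence-number check in the handler already gives).
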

\begin{proof}
By construction of the \sysname authorities (\Cref{fig:code} and \Cref{fig:code_core}):
Whenever a confirmed certificate $C$ is added to $\confirmedlog^x(\alpha)$, $\accountbalance^x(\alpha)$ is decreased by $\amount(C)$, and the value $\nextsequencenumber^x(\alpha)$ is incremented into $\sequencenumber(C) + 1$.
Any new synchronization order equally increases $\accountbalance^x(\alpha)$ and $\funding^x(\alpha)$.
Whenever a confirmed certificate $C$ is added to $\receivedlist^x(\alpha)$, eventually $\accountbalance^x(\alpha)$ is increased once by $\amount(C)$. (This may take time due to cross-shard updates.)
\end{proof}

\begin{lemma}[\sysmain invariant] \label{lemma-s2}
The total balance of all \sysname accounts on \sysmain is such that
\[
\begin{array}{l}
\totalbalance(\sigma) = \bigg( \sum_x \funding^x(\sigma) - \\
\qquad \sum_{C \in \redeemlog(\sigma)}\amount(C) \bigg)
\end{array}
\]
\end{lemma}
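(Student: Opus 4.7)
The plan is to prove Lemma \ref{lemma-s2} by induction on the sequence of \sysmain transactions $T \in \transactions(\sigma)$ that the \sysname smart contract processes. I would first fix an enumeration $T_1, T_2, \ldots$ of these transactions in the order they are committed on the \sysmain, writing $\sigma_i$ for the \sysmain state immediately after $T_i$ is processed (with $\sigma_0$ the state at contract deployment). The induction hypothesis is that the claimed equality holds for $\sigma_i$, and I would then show it is preserved from $\sigma_i$ to $\sigma_{i+1}$.

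For the base case, $\sigma_0$ has $\totalbalance(\sigma_0) = 0$, no funding has ever been recorded (so $\funding^x(\sigma_0) = 0$ for every $x$), and $\redeemlog(\sigma_0) = \emptyset$, making both sides zero. For the inductive step, I would appeal to the protocol description in \Cref{sec:onchain-state} to enumerate the only two kinds of transactions that mutate the relevant portion of the smart-contract state: \textbf{(i)} a funding transaction $T_{i+1}$ with $\recipient(T_{i+1}) = x$ and amount $a$, which by construction increases $\totalbalance$ by $a$ and contributes exactly $a$ to $\funding^x$, leaving $\redeemlog$ untouched; and \textbf{(ii)} a redeem transaction carrying a valid \sysmain transfer certificate $C$ of amount $a$, which (after the smart contract checks that $\sequencenumber(C)$ is not already present in $\redeemlog^{\sender(C)}(\sigma_i)$) appends $C$ to the redeem log and decreases $\totalbalance$ by $a$, while leaving every $\funding^x$ unchanged. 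In both cases, the change to $\totalbalance$ exactly matches the change to the right-hand side, so the invariant is preserved.

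To complete the argument I would note that any other \sysmain transaction submitted to the contract either reverts (if it fails validity checks) or touches none of the quantities in the statement, so it trivially preserves the invariant. The main subtlety, rather than a real obstacle, is being precise about the replay-protection check in case (ii): the condition ``$\sequencenumber(C) \notin \redeemlog^{\sender(C)}(\sigma_i)$'' guarantees that each accepted certificate is counted at most once in $\sum_{C \in \redeemlog(\sigma)} \amount(C)$, so the bookkeeping on the right-hand side matches the bookkeeping on the balance side. Given \Cref{lemma-s0} this also ensures that no two distinct certificates with the same (sender, sequence number) pair can both be accepted, which would otherwise threaten the uniqueness needed for the accounting. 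The lemma then follows by induction on $i$.
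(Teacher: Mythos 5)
Your proof is correct and takes essentially the same approach as the paper: the paper's own argument is exactly this invariant-preservation reasoning over funding and redeem transactions, just stated informally rather than as an explicit induction with a base case. Your additional remarks on replay protection and \Cref{lemma-s0} are sound but not needed for the equality itself.
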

\begin{proof}
By construction of the smart contract handling funding and redeeming transactions (\Cref{sec:libra-to-fastpay}): whenever a funding transaction $T$ is executed by the smart contract, both $\funding^x(\sigma)$ and $\totalbalance(\sigma)$ increase by $\amount(T)$.

Conversely, $\totalbalance(\sigma)$ decreases by $\amount(C)$ whenever a \sysmain transfer certificate $C$ is redeemed on-chain and added to $\redeemlog(\sigma)$.
\end{proof}

\begin{lemma}[Funding log synchronization] \label{lemma-s3}
For every honest authority $\alpha$ and every account $x$, it holds that
\[ \funding^x(\alpha) \leq \funding^x(\sigma) \]
\end{lemma}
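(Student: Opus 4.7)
The plan is to show that the synchronization log at an honest authority for account $x$ corresponds, in an injective and amount-preserving way, to a subset of the funding transactions on the \sysmain targeting $x$, so that the authority's view can only lag behind, never exceed, what the \sysmain has actually processed.

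First, I would appeal to the construction of the smart contract and of the authority algorithm described in \Cref{sec:libra-to-fastpay}: every synchronization order $S$ that an honest authority $\alpha$ appends to $\synchronizationlog^x(\alpha)$ is generated by the \sysmain smart contract upon execution of some funding transaction $T \in \transactions(\sigma)$ with $\recipient(T) = x$ and $\amount(S) = \amount(T)$, and is authenticated by $\alpha$ running a full \sysmain client before acceptance. By the security assumption on the \sysmain (safety), $\alpha$ cannot be convinced of a nonexistent transaction, so each such $S$ is matched by a genuine $T \in \transactions(\sigma)$ with the above properties.

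Next, I would exploit the uniqueness of transaction indices: the smart contract tags each funding event with a strictly increasing, sequential transaction index, and $\alpha$ refuses any synchronization order whose index does not equal $\lasttransactionindex(\alpha) + 1$. Consequently the map sending each $S \in \synchronizationlog^x(\alpha)$ to the corresponding funding transaction $T$ is injective: distinct sync orders in the log carry distinct indices and hence come from distinct on-chain funding transactions targeting $x$.

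Putting these together, $\synchronizationlog^x(\alpha)$ is in bijection with a subset of $\{T \in \transactions(\sigma) : \recipient(T) = x\}$, with matched amounts. Hence
\[
\funding^x(\alpha) \;=\; \sum_{S \in \synchronizationlog^x(\alpha)} \amount(S) \;\leq\; \sum_{\substack{T \in \transactions(\sigma)\\ \recipient(T)=x}} \amount(T) \;=\; \funding^x(\sigma),
\]
which is the desired inequality. The only non-routine point is the explicit appeal to \sysmain safety to justify that every authenticated event seen by $\alpha$ really is backed by a committed funding transaction; once that is granted, the injectivity from sequential transaction indices makes the bound immediate, so I do not anticipate any further obstacle.
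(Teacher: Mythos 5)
Your proof is correct and follows essentially the same route as the paper's (much terser) argument: both rest on the observation that, by construction of the synchronization protocol and the safety of the \sysmain, every increment of $\funding^x(\alpha)$ is backed by a matching, already-executed funding transaction on-chain. Your explicit use of the sequential transaction index to establish injectivity merely spells out what the paper leaves implicit; there is no substantive difference.
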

\begin{proof}
By definition of the synchronization with the \sysmain (\Cref{sec:libra-to-fastpay}), and by security of the \sysmain and its client, we note that $\funding^x(\alpha)$ only increases after a funding transaction has already increased $\funding^x(\sigma)$ by the same amount.
\end{proof}

\begin{lemma}[Balance check] \label{lemma-s4}
For every honest authority $\alpha$, when an order $O = \pendingconfirmation^x(\alpha) $ is pending, we have \[\amount(O) \leq \accountbalance^x(\alpha)\]
\end{lemma}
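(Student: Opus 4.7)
The plan is an induction on the execution trace of the honest authority $\alpha$, using the fact that both $\accountbalance^x(\alpha)$ and $\pendingconfirmation^x(\alpha)$ are only mutated at a handful of well-identified code locations in \Cref{fig:code} (plus the \synchronizationlog handler and the cross-shard update handler mentioned in the appendix). The invariant we maintain is exactly: \emph{whenever} $\pendingconfirmation^x(\alpha)$ holds some order $O$, we have $\amount(O) \leq \accountbalance^x(\alpha)$.

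\emph{Base case / establishment.} The field $\pendingconfirmation^x(\alpha)$ transitions from \texttt{None} to some $O$ only inside \texttt{handle\_transfer\_order}, along the branch where the explicit guard \texttt{ensure!(account.balance >= amount($O$))} has just been checked. At that exact point the invariant holds by construction. The same handler may also be re-entered with the same $O$ (the idempotent branch), in which case neither the balance nor the pending field changes, so the invariant is trivially preserved.

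\emph{Preservation.} I would then enumerate all operations that can change the relevant fields and check each one: \one synchronization orders from the \sysmain only \emph{increase} $\accountbalance^x(\alpha)$ (by Lemma \ref{lemma-s1}), so the inequality is preserved; \two cross-shard updates that credit $x$ as a recipient similarly only increase the balance; \three the only way $\accountbalance^x(\alpha)$ can \emph{decrease} is via \texttt{handle\_confirmation\_order} processing some certificate $C$ with $\sender(C) = x$. I must show this case does not violate the invariant. The handler first checks \texttt{sender\_account.next\_sequence == $\sequencenumber(C)$} and returns early otherwise. While $O$ is pending, we have $\nextsequencenumber^x(\alpha) = \sequencenumber(O)$ (the handler sets \pendingconfirmation before incrementing \nextsequencenumber). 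Hence the only certificates $C$ that actually reach the balance-mutating code have $\sequencenumber(C) = \sequencenumber(O)$ and $\sender(C) = x$; by Lemma \ref{lemma-s0} (uniqueness) any such $C$ certifies $O$ itself. The same handler atomically executes \texttt{sender\_account.pending = None} in the same block, so immediately after the balance has been debited, $\pendingconfirmation^x(\alpha) \neq O$ and the invariant holds vacuously.

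\emph{Main obstacle.} The delicate point is not any single arithmetic step but the bookkeeping: one has to be sure there is no interleaving in which the balance has been decreased while the pending slot still references $O$. This is where I would lean explicitly on two facts from the handler: (a) the balance decrement, the \nextsequencenumber increment, and the reset of \pending all occur in a single uninterrupted sequence inside \texttt{handle\_confirmation\_order}; and (b) earlier-sequence certificates are short-circuited by the \texttt{if sender\_account.next\_sequence > $\sequencenumber(C)$} branch and later-sequence ones fail the equality \texttt{ensure!}, so neither can decrement the balance while $O$ is pending. With these two observations the case analysis closes and the invariant follows.
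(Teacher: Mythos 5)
Your proof is correct and follows essentially the same route as the paper's: the invariant is established by the explicit balance guard in \texttt{handle\_transfer\_order}, and preserved because the only balance-decreasing operation (\texttt{handle\_confirmation\_order} at the matching sequence number) simultaneously clears \pendingconfirmation, so the inequality can never be observed violated while $O$ is pending---the paper states this same argument in a compressed, non-inductive form. One small imprecision: the appeal to \Cref{lemma-s0} to conclude that the confirmed certificate ``certifies $O$ itself'' is not actually justified (that lemma compares two \emph{certificates}, and an equivocating sender could get a different order for the same sequence number certified by a quorum not containing $\alpha$), but this claim is not needed, since your final observation that the handler resets \pendingconfirmation in the same block already closes the case vacuously.
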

\begin{proof}
By construction of the \sysname authorities (\Cref{fig:code}), if $O = \pendingconfirmation^x(\alpha)$, then $O$ was successfully processed by~$\alpha$ as a new transfer order from account $x$. At the time of the request, $\amount(O)$ did not exceed the current balance $B$. Since $O$ is still pending, in the meantime, no other transfer certificates from account $x$ have been confirmed by $\alpha$. (A confirmation would reset the field $\pendingconfirmation$ and prevent $O$ from being pending again due to increasing sequence numbers.) Therefore, the balance did not decrease, and $\accountbalance^x(\alpha) \geq B \geq \amount(O)$.
\end{proof}

\begin{proposition}[Account safety] \label{proposition-s5}
For every account $x$, at any given time, we have that
\[
\begin{array}{l}
\sum_{\sender(C)=x} \amount(C) \;\leq\; \\
\\
\qquad \funding^x(\sigma) \;+\; \sum_{\recipient(C)=x} \amount(C)
\end{array}
\]
\end{proposition}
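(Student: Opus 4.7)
\textbf{Proof proposal for Proposition~\ref{proposition-s5} (Account safety).}
The plan is to reduce the global inequality to an instantaneous one at a single honest authority and then apply the per-authority invariant (Lemma~\ref{lemma-s1}) together with the \sysmain synchronization bound (Lemma~\ref{lemma-s3}). First I would use Lemma~\ref{lemma-s0} to replace the sum on the left-hand side by a sum indexed by sequence numbers: the set of certificates $C$ with $\sender(C)=x$ is exactly $\{C_0, \ldots, C_{n-1}\}$ for some $n$, where $\sequencenumber(C_k)=k$. If $n=0$ the statement is trivial, so assume $n \geq 1$.

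The core of the proof is to find a single honest authority that ``witnesses'' all of $C_0, \ldots, C_{n-1}$. Since $C_{n-1}$ was signed by a quorum, by the standard Byzantine-quorum intersection argument there is an honest authority $\alpha^*$ that signed the transfer order $O_{n-1} = \val(C_{n-1})$. Inspecting \texttt{handle\_transfer\_order} in \Cref{fig:code}, $\alpha^*$ only signs $O_{n-1}$ if $\nextsequencenumber^x(\alpha^*) = n-1$ at the moment of signing. By Lemma~\ref{lemma-s1} applied at that moment, this forces $\confirmedlog^x(\alpha^*) = \{C_0, \ldots, C_{n-2}\}$, and in particular
\[
\accountbalance^x(\alpha^*) + \sum_{k=0}^{n-2} \amount(C_k) \;\leq\; \funding^x(\alpha^*) + \sum_{C \in \receivedlist^x(\alpha^*)} \amount(C).
\]
The balance check performed before signing (Lemma~\ref{lemma-s4} applied at the moment $O_{n-1}$ becomes pending, or equivalently the \texttt{ensure!} on the balance in \Cref{fig:code}) gives $\amount(C_{n-1}) \leq \accountbalance^x(\alpha^*)$, so adding this in yields
\[
\sum_{k=0}^{n-1} \amount(C_k) \;\leq\; \funding^x(\alpha^*) + \sum_{C \in \receivedlist^x(\alpha^*)} \amount(C).
\]

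It remains to bound the right-hand side by the global quantities at the current time. Lemma~\ref{lemma-s3} gives $\funding^x(\alpha^*) \leq \funding^x(\sigma)$, and by definition every $C \in \receivedlist^x(\alpha^*)$ is a certified transfer with $\recipient(C)=x$, so $\sum_{C \in \receivedlist^x(\alpha^*)} \amount(C) \leq \sum_{\recipient(C)=x} \amount(C)$. Both bounds are monotone in time (funding from the \sysmain only accumulates, and the set of certificates with recipient $x$ only grows), so the inequality continues to hold at the ``given time'' in the statement even though we evaluated Lemma~\ref{lemma-s1} at the earlier moment $\alpha^*$ signed. Chaining everything gives the desired bound.

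The main obstacle I anticipate is the temporal bookkeeping: the invariant from Lemma~\ref{lemma-s1} holds at the moment $\alpha^*$ signs $C_{n-1}$, whereas the proposition quantifies over the current time. The key observations that make this work are \emph{(i)} that an honest authority's signature on sequence $n-1$ already forces the \emph{entire} prior certified history $C_0, \ldots, C_{n-2}$ to be present in its $\confirmedlog$, so no separate synchronization step is needed across the sequence, and \emph{(ii)} that the bounding quantities $\funding^x(\sigma)$ and $\sum_{\recipient(C)=x}\amount(C)$ are monotonically non-decreasing. Handling the trivial case $n=0$ and double-checking that transfers with destination on the \sysmain are included on the left-hand side (they are, since $\sender(C)=x$ regardless of recipient type) completes the argument.
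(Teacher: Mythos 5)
Your proposal is correct and follows essentially the same route as the paper's proof: quorum intersection yields an honest authority that signed the highest-sequence-number certificate, Lemma~\ref{lemma-s1} and Lemma~\ref{lemma-s4} at the moment of signing bound the full certified history from $x$, and Lemma~\ref{lemma-s3} plus non-negativity of amounts transfer the bound to the global quantities. Your explicit remarks on monotonicity in time and on the contiguity of sequence numbers (which really comes from the $\confirmedlog$ structure in Lemma~\ref{lemma-s1}, not from Lemma~\ref{lemma-s0} alone) only make explicit steps the paper leaves implicit.
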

\begin{proof}
Let $n$ be the highest sequence number of a transfer certificate $C_n$ from $x$.
Let $\alpha$ an honest authority whose signature is included in the certificate. At the time of the signature,  we had $\val(C_n) = \pendingconfirmation^x(\alpha)$. Therefore, by \Cref{lemma-s1} and \Cref{lemma-s4}, we have
\[
\begin{array}{l}
\bigg( \amount(C_n) \;+\; \sum_{C\in\confirmedlog^x(\alpha)} \amount(C) \bigg) \\[0.6em]
\quad \leq \bigg( \funding^x(\alpha) \;+\; \sum_{C \in \receivedlist^x(\alpha)} \amount(C) \bigg)
\end{array}
\]
Given that $n$ is the highest sequence number, by \Cref{lemma-s0} and \Cref{lemma-s1}, the left-hand term exactly covers the certified transfer orders from $x$ and is equal to $\sum_{\sender(C)=x} \amount(C)$.

Given that amounts are non-negative, for every honest node $\alpha$, we have \[\sum_{C \in \receivedlist^x(\alpha)} \amount(C) \;\leq\; \sum_{\recipient(C)=x}\amount(C)\]

Finally, $\funding^x(\alpha) \leq \funding^x(\sigma)$ by \Cref{lemma-s3}.
\end{proof}

\begin{proof}[Proof of \Cref{theorem-s6} (Solvency)] By applying \Cref{proposition-s5} on every account and summing, we obtain:
\[
\begin{array}{ll}
\sum_x \funding^x(\sigma) \geq \\[0.4em]
\qquad \bigg( \sum_{C} \amount(C) - \sum_{\recipient(C)\in \sysnameaddresses} \amount(C) \bigg) \\[0.6em]
\qquad = \; \sum_{\recipient(C) \in \sysmainaddresses} \amount(C)
\end{array}
\]
\end{proof}


\subsection{Liveness}

\begin{proof}[Proof of \Cref{proposition-l1B} (Redeeming to \sysmain)]
We have seen in \Cref{theorem-s6} that the smart contact always has enough funding for all certified \sysmain transfer orders. The definition of $\redeemlog(\sigma)$ (\Cref{sec:libra-to-fastpay}) thus ensures that any new certified \sysmain transfer order can be redeemed exactly once.
\end{proof}

\begin{proof}[Proof of \Cref{proposition-l1A} (Redeeming to \sysname).]
If a certificate $C$ exists for account $x$ and sequence number $n$, this means at least $f+1$ honest authorities contributed signatures to the transfer order $O = \val(C)$.
By construction of \sysname, these authorities have received (\Cref{fig:code}) and will keep available (\Cref{sec:audits}) all the previous confirmation orders $C_0 \ldots C_{n-1}$ with $\sender(C_k) = x$, $\sequencenumber(C_k) = k$. Therefore, any client can retrieve them and eventually bring any other honest authority up to date with $C$.
\end{proof}

\begin{proof}[Proof of \Cref{proposition-l2} (Availability of certificates)]
Let $B \geq \amount(O)$ be the value defined as follows at the time of the creation of the new transfer order $O$:
\[
\begin{array}{l}
B \; = \; \bigg( \funding^x(\sigma) \;-\; \sum_{\left\{\!\!\!\!\scriptsize\begin{array}{l} \sender(C)=x\\ \sequencenumber(C) < n\end{array}\right.}\!\!\!\!\amount(C) \\[0.4em]
\quad \; + \; \sum_{C \in \receivedlist(x)} \amount(C) \bigg)
\end{array}
\]

By a case analysis similar to the proof of \Cref{lemma-s1}, provided that the owner of $x$ is communicating the information described in \Cref{proposition-l2} to the authority $\alpha$, it will hold eventually that $\accountbalance^x(\alpha) \geq B \geq \amount(O)$ and $\nextsequencenumber^x(\alpha) = n$. We deduce that eventually $\alpha$ will accept the transfer order~$O$ and make the value of its signed (pending) order available.
\end{proof}

\subsection{Performance under Byzantine Failures}

The \sysname protocol does not rely on any designated leader (like PBFT~\cite{castro1999practical}) to make progress or create proposals; \sysname authorities do not directly communicate with each other, and their actions are symmetric. Clients create certificates by gathering the first $2f+1$ responses to a valid transfer order, and no action of a Byzantine authority may delay the creation of a certificate. A Byzantine authority may not even present a signature on a different order as a response to confuse a correct client, since it would have to be signed by the correct payer. Subsequently, a correct client submits the confirmation order to all authorities. Again, Byzantine authorities cannot in any way delay honest authorities from processing the payment locally in their databases, and enabling a subsequent payment for the sending account.

Byzantine clients may attempt denial of service attacks by over-using the system, and for example creating a very large number of receiving accounts (this could be disincentivized by charging some fee for an account creation). However, an attempt to equivocate by sending two transfer orders for a single sequence number could either result in their own account being locked (no single transfer order can achieve $2f+1$ signatures to form a certificate and move to the next sequence number), or one of them succeeding---neither of which degrade performance. Transfer orders with insufficient funds or incorrect sequence numbers are simply rejected, which does not significantly affect performance (if anything they do not result in confirmation orders that are more costly to process than transfer orders, see \Cref{sec:evaluation}).

\section{Code Listings for Core Operations} \label{sec:code-listings}

Algorithms for the core authority operations are simplified directly from the Rust implementation. We omit explicit typing, details of error messages returned, de-referencing, and managing variable ownership. The macro \emph{ensure}, returns with an error unless the condition is fulfilled, and \emph{bail} always returns with an error.

\begin{figure}
\begin{lstlisting}[mathescape=true]
fn handle_cross_shard_commit($\authority$, $\cert$) -> Result {
    let $\transfer$ = $\val(\cert)$;
    let recipient = match $\recipient(\transfer)$ {
        Address::FastPay(recipient) => recipient,
        Address::Primary(_) => { bail!(); }
    };
    ensure!($\authority$.in_shard(recipient));
    let recipient_account = $\accounts(\authority)$.get(recipient)
        .or_insert(AccountState::new());
    recipient_account.balance += $\amount(\transfer)$;
    Ok()
}

fn handle_primary_synchronization_order($\authority$, $\sync$) -> Result {
    /// Update $\recipient(S)$ assuming that $\sync$ comes from
    /// a trusted source (e.g. Primary client).
    let recipient = $\recipient(\sync)$;
    ensure!($\authority$.in_shard(recipient));

    if $\transactionindex(\sync)$ <= $\lasttransactionindex(\authority)$ {
        /// Ignore old synchronization orders.
        return Ok();
    }
    ensure!($\transactionindex(\sync)$ == $\lasttransactionindex(\authority)$ + 1);

    $\lasttransactionindex(\authority)$ += 1;
    let recipient_account = $\accounts(\authority)$.get(recipient)
        .or_insert(AccountState::new());
    recipient_account.balance += $\amount(\sync)$;
    recipient_account.synchronized.push($\sync$);
    Ok()
}
\end{lstlisting}
\vspace{-1em}
\caption{Authority algorithms for cross-shard updates and (\sysmain) synchronization orders.}
\label{fig:code_core}
\end{figure}

\section{Additional Figures} \label{sec:additional-figures}
Figures \ref{fig:x-z-1000-4-transfer} and \ref{fig:x-z-1000-4-confirmation} show the increase of throughput of transfer and confirmation orders with the number of shards, for various total system loads. They complement \Cref{sec:throughput-eval} by showing that the throughput is not affected by the system load.

\begin{figure}[t]
\includegraphics[width=.47\textwidth]{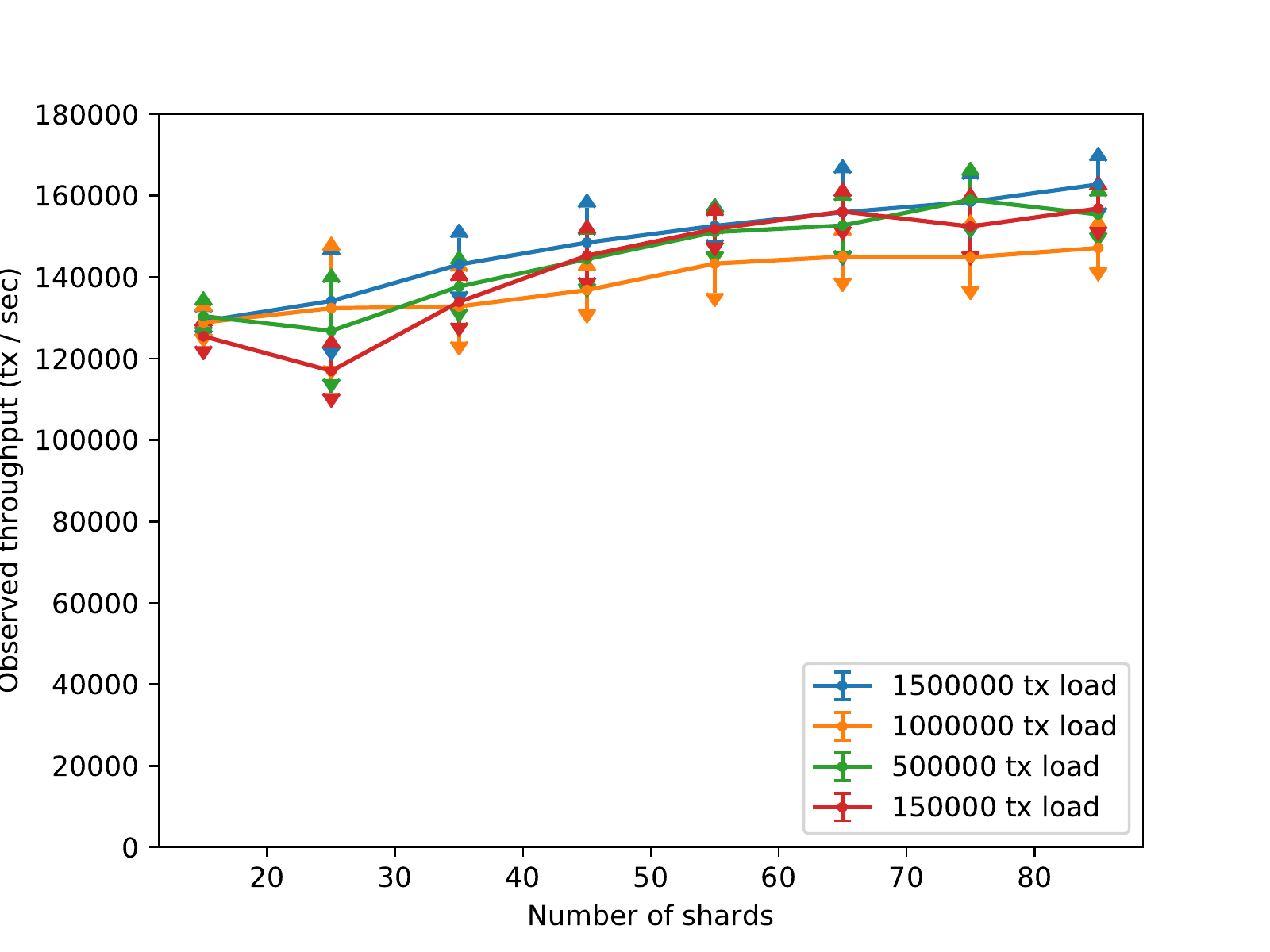}
\caption{Variation of the throughput of transfer orders with the number of shards, for various loads. The in-flight parameter is set to 1,000.}
\label{fig:x-z-1000-4-transfer}
\end{figure}
\begin{figure}[t]
\includegraphics[width=.47\textwidth]{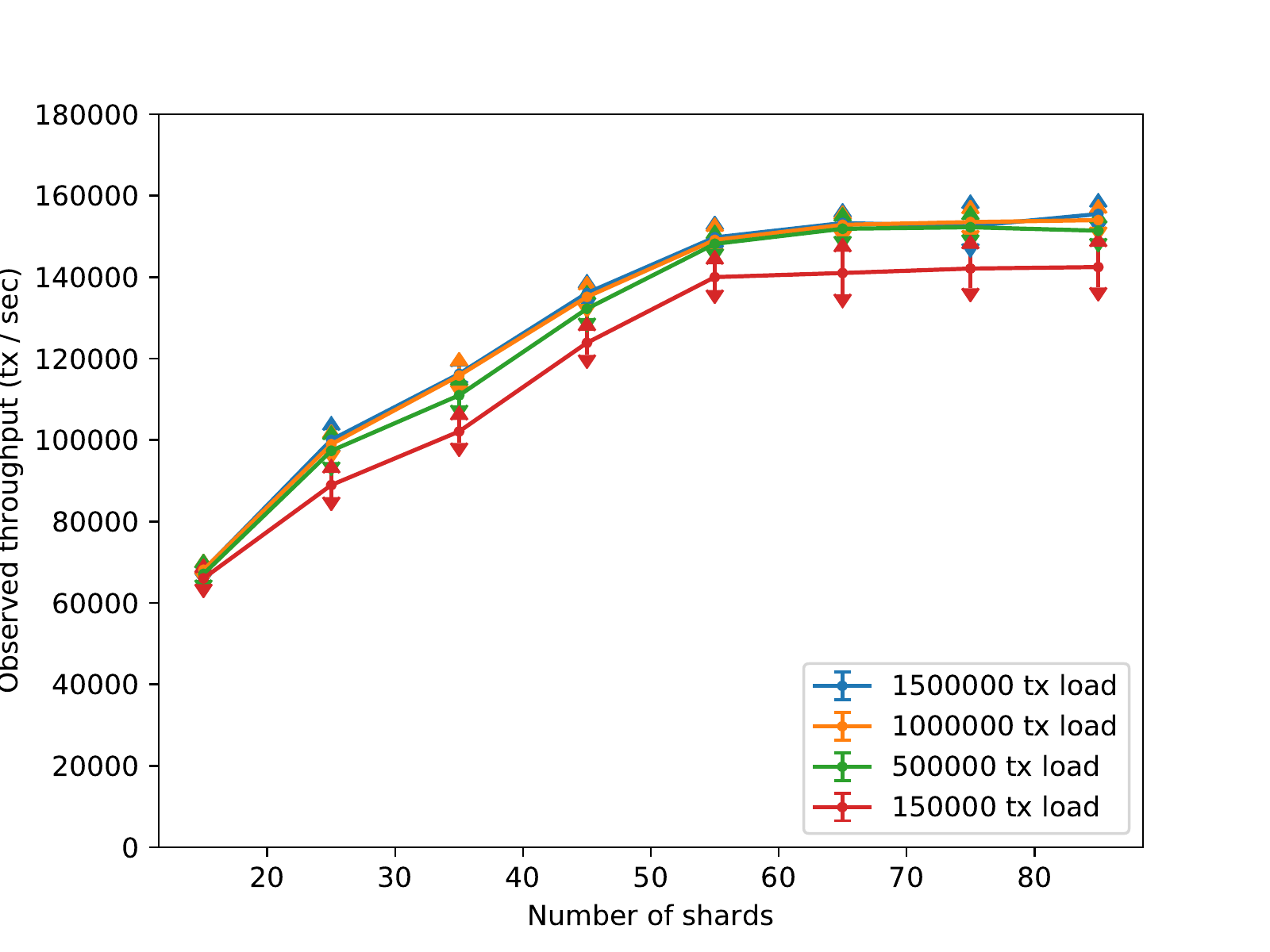}
\caption{Variation of the throughput of confirmation orders with the number of shards, for various loads. The certificates are issued by 4 authorities, and the in-flight parameter is set to 1,000.}
\label{fig:x-z-1000-4-confirmation}
\end{figure}

\end{document}